\documentclass[a4paper, reqno]{amsart}
\usepackage{psfig}
\usepackage{graphicx}
\usepackage{array}
\newcolumntype{M}[1]{>{\raggedright}m{#1}}
\usepackage{tikz}
\usetikzlibrary{calc}

\newtheorem{theorem}{Theorem}
\newtheorem{lemma}[theorem]{Lemma}

\newtheorem{proposition}[theorem]{Proposition}
\newtheorem{corollary}[theorem]{Corollary}

\newtheorem{Fact}[theorem]{Fact}
\theoremstyle{definition}
\newtheorem{definition}{Definition}

\newcommand{\btree}{\ensuremath{\beta(0,1)}-tree}
\newcommand{\btrees}{\btree s}
\newcommand{\F}{\mathcal{F}}

\newcommand{\nodestyle}{
  \tikzstyle{every node} = [font=\footnotesize];
}
\newcommand{\discstyle}{
  \tikzstyle{blk} =
    [ circle,fill=black,draw=black, minimum size=3.7pt, inner sep=0pt ];
  \tikzstyle{wht} =
    [ circle,fill=white,draw=black, minimum size=3.7pt, inner sep=0pt ];
  \tikzstyle{sml} =
    [ circle,fill=black,draw=black, minimum size=3pt, inner sep=0pt ];
}

\newcommand{\style}{
  \nodestyle
  \discstyle
}
\newcommand{\scl}{0.5}

\newcommand{\trea}[4]{
  \begin{tikzpicture}[scale=\scl, semithick, baseline=(d)]
    \style;
    \node [sml] (a) at (0,0) {};
    \node [sml] (b) at (0,1) {};
    \node [sml] (c) at (0,2) {};
    \node [sml] (d) at (0,3) {};
    \draw 
      (a) node[right=2pt] {#1} --
      (b) node[right=2pt] {#2} --
      (c) node[right=2pt] {#3} --
      (d) node[right=2pt] {#4};
  \end{tikzpicture}
}
\newcommand{\treav}[3]{
  \begin{tikzpicture}[scale=\scl, semithick, baseline=(d)]
    \style;
    \node [sml] (a) at (0,0) {};
    \node [sml] (b) at (0,1) {};
    \node [sml] (c) at (0,2) {};
    \draw 
      (a) node[right=2pt] {#1} --
      (b) node[right=2pt] {#2} --
      (d) node[right=2pt] {#3};
  \end{tikzpicture}
}
\newcommand{\treb}[4]{
  \begin{tikzpicture}[scale=\scl, semithick, baseline=(d)]
    \style;
    \node [sml] (a) at (-0.65,0) {};
    \node [sml] (b) at ( 0.65,0) {};
    \node [sml] (c) at (0,1) {};
    \node [sml] (d) at (0,2) {};
    \draw (a) node[below=2pt] {#1} -- (c);
    \draw (b) node[below=2pt] {#2} -- (c);
    \draw (c) node[right=2pt] {#3} -- (d) node[right=2pt] {#4};
  \end{tikzpicture}
}
\newcommand{\trebv}[3]{
  \begin{tikzpicture}[scale=\scl, semithick, baseline=(d)]
    \style;
    \node [sml] (a) at (-0.65,0) {};
    \node [sml] (b) at ( 0.65,0) {};
    \node [sml] (c) at (0,1) {};
    \draw (a) node[below=2pt] {#1} -- (c);
    \draw (b) node[below=2pt] {#2} -- (c) node[right=2pt] {#3};
  \end{tikzpicture}
}
\newcommand{\trec}[4]{
  \begin{tikzpicture}[scale=\scl, semithick, baseline=(d)]
    \style;
    \node [sml] (a) at (-0.65,0) {};
    \node [sml] (b) at (-0.65,1) {};
    \node [sml] (c) at ( 0.65,1) {};
    \node [sml] (d) at (   0,2) {};
    \draw (a) node[left=2pt] {#1} -- (b);
    \draw (b) node[left=2pt] {#2} -- (d);
    \draw (c) node[below=1pt] {#3} -- (d) node[right=2pt] {#4};
  \end{tikzpicture}
}
\newcommand{\tred}[4]{
  \begin{tikzpicture}[scale=\scl, semithick, baseline=(d)]
    \style;
    \node [sml] (a) at ( 0.65,0) {};
    \node [sml] (b) at ( 0.65,1) {};
    \node [sml] (c) at (-0.65,1) {};
    \node [sml] (d) at (   0,2) {};
    \draw (a) node[right=2pt] {#1} -- (b);
    \draw (b) node[right=2pt] {#2} -- (d);
    \draw (c) node[below=1pt] {#3} -- (d) node[right=2pt] {#4};
  \end{tikzpicture}
}
\newcommand{\tree}[4]{
  \begin{tikzpicture}[scale=\scl, semithick, baseline=(d)]
    \style;
    \node [sml] (a) at (-1,0) {};
    \node [sml] (b) at ( 0,0) {};
    \node [sml] (c) at ( 1,0) {};
    \node [sml] (d) at ( 0,1) {};
    \draw (a) node[below=1pt] {#1} -- (d);
    \draw (b) node[below=1pt] {#2} -- (d);
    \draw (c) node[below=1pt] {#3} -- (d) node[right=2pt, yshift=2pt] {#4};
  \end{tikzpicture}
}
\newcommand{\figtree}{
  \begin{tikzpicture}[xscale=0.37, yscale=0.5, semithick, baseline=(r)]
    \style;
    \node [sml] (r)       at ( 0,6) {};
    \node [sml] (r1)      at (-3,5) {};
    \node [sml] (r2)      at (-1,5) {};
    \node [sml] (r3)      at ( 1,5) {};
    \node [sml] (r31)     at ( 1,4) {};
    \node [sml] (r4)      at ( 3,5) {};
    \node [sml] (r41)     at ( 3,4) {};
    \node [sml] (r411)    at ( 3,3) {};
    \node [sml] (r4111)   at ( 2,2) {};
    \node [sml] (r41111)  at ( 2,1) {};
    \node [sml] (r411111) at ( 2,0) {};
    \node [sml] (r4112)   at ( 4,2) {};
    \draw (r) node[above=2pt] {4} -- (r1)  node[below left=-1pt] {0}
          (r)                     -- (r2)  node[below left=-1pt] {0}
          (r)                     -- (r3)  node[left=2pt, yshift=-2pt] {1}
          (r)                     -- (r4)  node[right=2pt] {2}
          (r3)                    -- (r31) node[left=2pt] {0}
          (r4)                    -- (r41) node[right=2pt] {1}
          (r41)                   -- (r411) node[right=2pt] {3}
          (r411)                  -- (r4111) node[left=2pt] {2}
          (r4111)                 -- (r41111) node[left=2pt] {1}
          (r41111)                -- (r411111) node[left=2pt] {0}
          (r411)                  -- (r4112) node[right=2pt] {0};
  \end{tikzpicture}
}

\def\triangle[#1,#2]#3;{
  \begin{scope}[shift={#3}]
    \begin{scope}[rotate=#2]
      \path (0,0)     coordinate (O);
      \path (245:4.4) coordinate (A);
      \path (-65:4.4) coordinate (B);
      \filldraw[gray!20] (O) -- (A) -- (B) -- cycle;
      \draw (O) -- (A);
      \draw (O) -- (B);
      \node at (0,-2.6) {#1};
      \node[o] at (O) {};
    \end{scope}
  \end{scope}
}

\renewcommand{\root}{\operatorname{root}}

\DeclareMathOperator{\sub}{\mathrm{sub}}

\newcommand{\rootM}{\operatorname{f\mathfrak{1}r\mathfrak{3}}}
\newcommand{\subM}{\operatorname{s\mathfrak{1}r\mathfrak{3}}}

\title{Gray coding planar maps}

\author[S. Avgustinovich]{Sergey Avgustinovich}
\address{
  Sobolev Institute of Mathematics, 4 Acad. Koptyug Ave,
  630090 Novosibirsk, Russia,
  \texttt{avgust@math.nsc.ru}
}
\author[S. Kitaev]{Sergey Kitaev}
\address{
  Department of Computer and Information Sciences,
  University of Strathclyde,
  26 Richmond Street,
  Glasgow G1 1XH, United Kingdom,
  \texttt{sergey.kitaev@strath.ac.uk}
}
\author[V. N. Potapov]{Vladimir N. Potapov}
\address{
  Sobolev Institute of Mathematics, 4 Acad. Koptyug Ave,
  630090 Novosibirsk, Russia,
  \texttt{vpotapov@math.nsc.ru}
}
\author[V. Vajnovszki]{Vincent Vajnovszki}
\address{ 
LE2I, Universit\'e de Bourgogne Franche-Comt\'e, 
BP 47870, 21078 Dijon Cedex, France, 
  \texttt{vvajnov@u-bourgogne.fr}
}

\begin{document}

\begin{abstract}
The idea of (combinatorial) Gray codes is to list objects in question in such a way that two successive objects
differ in some pre-specified small way.
In this paper, we utilize  $\beta$-description trees to cyclicly Gray code three classes of cubic planar maps, namely, bicubic planar maps, 3-connected cubic planar maps, and cubic non-separable planar maps.
\end{abstract}

\keywords{planar map, bicubic planar map, cubic non-separable planar map, 3-connected cubic planar map, Gray code, description tree, \btree}

\maketitle
\thispagestyle{empty}

\section{Introduction}

\noindent
{\bf Gray codes.} The problem of exhaustively listing the objects of a given class is important for several fields of science such as computer science, hardware and software, biology and (bio)chemistry. The idea of so-called {\em Gray codes} (or {\em combinatorial Gray codes}) is to list the objects in such a way that two successive objects differ in some pre-specified small way; in addition, if the last and first objects in the list differ in the same small way, then the Gray code is called cyclic. 
In \cite{Walsh}  a general definition is given, where a Gray code is defined as ‘an infinite set of word-lists with unbounded word-length such that the Hamming distance between any two successive words is bounded independently of the word-length’ (the Hamming distance is the number of positions in which the words differ).

Originally, a Gray code was used in a telegraph demonstrated by French engineer \'Emile Baudot in 1878. However, these days we normally say ``the Gray code'' to refer to the {\em reflected binary code} introduced by Frank Gray in 1947 to list all binary words of length $n$. Much has been discovered and written about the Gray code (see for example
\cite{Rus} or
\cite{Savage,Doran} for surveys) and it was used, for example, in {\em error corrections in digital communication} and in solving puzzles as {\em Tower of Hanoi puzzle}. 
On the other hand, the area of combinatorial Gray codes was popularized by Herbert Wilf in 1988-89 and since then such codes were found for many combinatorial structures, e.g. for {\em involutions} and {\em fixed-point free involutions}, {\em derangements} and certain classes of {\em pattern-avoiding permutations} (see \cite{DFMV,Savage} and references therein).

Existence of a (resp., cyclic) Gray code is often established via finding a {\em Hamiltonian path} (resp., {\em Hamiltonian cycle}) in a certain graph corresponding to objects in question. In such a graph two vertices are connected by an edge if and only if the respective objects can follow each other in a Gray code. A Hamiltonian path (resp., Hamiltonian cycle) in a graph is a path (resp., cycle) in the graph that goes through each vertex exactly once. \\    

\noindent
{\bf Planar maps.} A \emph{planar map} is a connected graph embedded in the plane with
no edge-crossings, considered up to continuous deformations. A map has
\emph{vertices}, \emph{edges}, and \emph{faces}. The maps we consider shall be \emph{rooted},
meaning that a directed edge has been distinguished as the root. Without loss of generality, we assume that the root is always on the outer-face, called {\em root face}, and it is oriented counterclockwise.

A planar map in which each vertex is of degree 3 is {\em cubic}; it is
{\em bicubic} if, in addition, it is bipartite, that is, if its
vertices can be colored using two colors, say, black and white, so
that adjacent vertices are assigned different colors. A map is {\em $k$-connected} if there does not exist a set of $k-1$ vertices whose removal disconnects the map. $2$-connected maps are also known as {\em non-separable maps}.

For brevity, we omit the word ``planar'' in the classes of planar maps considered in this paper. 

Tutte \cite[Chapter 10]{Tutte1998}
founded the enumerative theory of planar maps in a series of papers in
the 1960s (see~\cite{Tutte1963} and the references in~\cite{CS1997}). In particular, 
the number of bicubic maps and cubic non-separable maps on $2n$ vertices are, respectively, $$\frac{3\cdot 2^{n-1}(2n)!}{n!(n+2)!}\mbox{\ \ and \ \ }\frac{2^n(3n)!}{(n+1)!(2n+1)!}.$$ 

\ \\

\noindent
{\bf $\beta(a,b)$-trees and planar maps.} A {\em valuated tree} is a rooted plane tree with non-negative integer labels on its vertices. A {\em description tree} introduced by Cori et al. in~\cite{CJS} is a valuated tree such that the label of each vertex $v$ belongs to a set of values that depends only on the labels of $v$'s sons according to a given rule. Description trees give a framework for recursively decomposing several families of planar
maps. $\beta$-description trees, introduced next, are of interest in this paper.

\begin{definition}\label{betaAB-tree} A $\beta(a,b)$-tree is a rooted plane tree whose vertices are labeled with non-negative integers such that
\begin{enumerate}
\item leaves have label $a$;
\item the label of the root is the sum of its children's
  labels;
\item the label of any other vertex is at least $a$ and at most $b$ plus the sum of its children's
  labels.
\end{enumerate}
\end{definition}

It was shown in~\cite{CJS,CS1997} that the following objects are in one-to-one correspondence:
\begin{itemize}
\item $\beta(0,1)$-trees and bicubic maps;
\item $\beta(1,1)$-trees and 3-connected cubic maps;
\item $\beta(2,2)$-trees and cubic non-separable maps; and
\item $\beta(1,0)$-trees and non-separable maps.
\end{itemize}

Also, it is straightforward to see that $\beta(0,0)$-maps are in one-to-one correspondence with rooted planar trees, since one can erase the labels in this case as all of them are 0. \\

\noindent
{\bf The main results in this paper.} One can ask the following question: Is it possible to Gray code a given class of maps? To our best knowledge, no results are known in this direction possibly due to a rather complicated nature of (planar) maps.  Thus, one needs to encode the class of maps by words, and then to try to list these words using specified criteria on the number of positions in which the words can differ. Our approach is in encoding the $\beta$-description trees in question, which are in a bijective correspondence with the maps of interest, on $n$ vertices  by tuples of length $3n-2$; the first $2n-2$ elements of the tuple encode the shape of a tree (using so-called {\em Dyck words}), and the remaining elements are used to encode its labels. In either case, for convenience of presentation, we will consider Gray coding shapes of trees separately, which will be given by a known result, while a real challenge will be lying in (cyclicly) Gray coding $\beta$-description trees having the same shape.    

We note that $\beta$-description trees have already been used to obtain non-trivial equidistribution results on planar maps, e.g. bicubic maps~\cite{CKM}, and these trees are a key object in this paper. We will present our results on \btrees, which will give a Gray code for bicubic maps, and then discuss a straightforward extension of that to $\beta(a,b)$-trees with $b\geq 1$. The later will give at once Gray codes for cubic non-separable maps and 3-connected cubic maps. Thus, our focus will be on \btrees. In particular, the only bijective correspondence we will explain in this paper is that between \btrees\ and bicubic maps, to give an idea on how bijections between maps and $\beta$-description trees corresponding to them could look like; we refer to \cite{CJS,CS1997} for bijections between $\beta(1,1)$-trees  (resp., $\beta(2,2)$-trees) and 3-connected cubic maps (resp., cubic non-separable maps).

This paper is organized as follows. In Section~\ref{sec:btrees} we discuss \btrees\ and bicubic maps, in particular sketching a bijection between these sets of objects. In Section~\ref{btrees-same-underlying} we discuss a key component in this paper, namely, Gray coding \btrees\ having the same shape. Cyclic Gray coding \btrees\ having the same shape is discussed in Section~\ref{btrees-same-underlying-cyclic}. Even though Gray coding cyclicly is what we are actually interested in, we first present a Gray code for \btrees\ having the same shape without the cyclic requirement to prepare the Reader for the more involved arguments in the cyclic case. The main results are presented in Section~\ref{main} along with a definition of Dyck words and necessary results about them. Finally, in Section~\ref{final-sec} we provide several directions for further research.  

\section{\btrees\ and bicubic maps}\label{sec:btrees}

Letting $a=0$ and $b=1$ in Definition~\ref{betaAB-tree} we will obtain a definition of a \btree. Note that the label of the root of a \btree\ is defined uniquely from the labels of its children, which allows us to modify this definition to suit better to our purposes. In this paper, we will consider two modifications of the definition. First, we will re-define the root label to be one more than the sum of its children (as was done in \cite{CKM} for a better description of statistics preserved under the bijection with bicubic maps to be described below), and then we will let the root label be $*$ (to allow two \btrees\ having the same shape to differ just in one label). Thus, no matter which definition we use, we still have a class of trees in one-to-one correspondence with the originally defined \btrees, and slightly abusing the notation, which will not cause any confusion, we will refer to all of the ``modified \btrees'' as \btrees.    

We continue with stating a slightly modified definition of \btrees, which are particular
instances of $\beta(a,b)$-trees introduced in Definition \ref{betaAB-tree}.

\begin{definition}\label{beta01-tree} A \btree\ is a rooted plane tree whose vertices  are labeled with
nonnegative integers such that
\begin{enumerate}
\item leaves have label $0$;
\item the label of the root is one more than the sum of its children's
  labels;
\item the label of any other vertex exceeds the sum of its children's
  labels by at most 1.
\end{enumerate}
\end{definition}

The unique \btree\ with exactly one vertex (and no edges) is called
{\em trivial}; the root of the trivial tree has label  $0$. Any other \btree\ is called {\em nontrivial}. In
Figure~\ref{beta01}, appearing in~\cite{CKM}, we have listed all \btrees\ on 4 vertices.
\begin{figure}[h]
  $$
  \trea{0}{0}{0}{1}\qquad
  \trea{0}{0}{1}{2}\qquad
  \trea{0}{1}{0}{1}\qquad
  \trea{0}{1}{1}{2}\qquad
  \trea{0}{1}{2}{3}\qquad
  \treb{0}{0}{0}{1}\qquad
  \treb{0}{0}{1}{2}
  $$
  $$
  \trec{0}{0}{0}{1}\qquad
  \trec{0}{1}{0}{2}\qquad
  \tred{0}{0}{0}{1}\qquad
  \tred{0}{1}{0}{2}\qquad
  \tree{0}{0}{0}{1}
  $$
  \caption{All \btrees\ on 4 vertices.}\label{beta01}
\end{figure}
Let $\root(T)$ denote the root label of $T$, and let $\sub(T)$ denote
the number of children of the root.  We say that a \btree\ $T$ is {\em
  reducible} if $\sub(T)>1$, and {\em irreducible} otherwise. Any
reducible tree can be written as a sum of irreducible ones, where the
sum $U\oplus V$ of two trees $U$ and $V$ is defined as the tree
obtained by identifying the roots of $U$ and $V$ into a new root with
label $\root(U)+\root(V)-1$.  See Figure~\ref{fig:tree}, taken from~\cite{CKM}, for an
example.
\begin{figure}[h]
  \figtree
  \!\!\!\!=\;\;\;\;
  \begin{tikzpicture}[xscale=0.37, yscale=0.5, semithick, baseline=(r)]
    \style;
    \node [sml] (r)       at (0,1) {};
    \node [sml] (r1)      at (0,0) {};
    \draw (r) node[above=2pt] {1} -- (r1) node[below=2pt] {0};
  \end{tikzpicture}
  \,$\oplus$\,
  \begin{tikzpicture}[xscale=0.37, yscale=0.5, semithick, baseline=(r)]
    \style;
    \node [sml] (r)       at (0,1) {};
    \node [sml] (r1)      at (0,0) {};
    \draw (r) node[above=2pt] {1} -- (r1) node[below=2pt] {0};
  \end{tikzpicture}
  \,$\oplus$\,
  \begin{tikzpicture}[xscale=0.37, yscale=0.5, semithick, baseline=(r)]
    \style;
    \node [sml] (r)       at (0,2) {};
    \node [sml] (r3)      at (0,1) {};
    \node [sml] (r31)     at (0,0) {};
    \draw (r) node[above=2pt] {2} -- (r3) node[right=1pt] {1} -- (r31) node[below=2pt] {0};
  \end{tikzpicture}
  $\!\oplus\!$\hspace{-2.5ex}
  \begin{tikzpicture}[xscale=0.37, yscale=0.5, semithick, baseline=(r)]
    \style;
    \node [sml] (r)       at (3,6) {};
    \node [sml] (r4)      at (3,5) {};
    \node [sml] (r41)     at (3,4) {};
    \node [sml] (r411)    at (3,3) {};
    \node [sml] (r4111)   at (2,2) {};
    \node [sml] (r41111)  at (2,1) {};
    \node [sml] (r411111) at (2,0) {};
    \node [sml] (r4112)   at (4,2) {};
    \draw (r) node[above=2pt] {3}
          (r)                     -- (r4)  node[right=2pt] {2}
          (r4)                    -- (r41) node[right=2pt] {1}
          (r41)                   -- (r411) node[right=2pt] {3}
          (r411)                  -- (r4111) node[left=2pt] {2}
          (r4111)                 -- (r41111) node[left=2pt] {1}
          (r41111)                -- (r411111) node[left=2pt] {0}
          (r411)                  -- (r4112) node[right=2pt] {0};
  \end{tikzpicture}
  \caption{Decomposing a reducible \btree.}\label{fig:tree}
\end{figure}

Note also that any irreducible tree with at least one edge
is of the form $\lambda_i(T)$, where $0 \leq i\leq \root(T)$ and
$\lambda_i(T)$ is obtained from $T$ by joining a new root via an
edge to the old root; the old root is given the label $i$, and the new
root is given the label $i+1$. For instance,
$$
\text{if }\;
T =
\begin{tikzpicture}[xscale=0.37, yscale=0.5, semithick, baseline=(r)]
  \style;
  \node [sml] (r)  at ( 0,2) {};
  \node [sml] (r1) at (-1,1) {};
  \node [sml] (r2) at ( 1,1) {};
  \node [sml] (r21) at ( 1,0) {};
  \draw (r) node[above=2pt] {2} -- (r1) node[below=2pt] {0}
        (r)                     -- (r2) node[right=1pt] {1} -- (r21) node[below=2pt] {0};
\end{tikzpicture}
\text{then}\quad
\newcommand{\lambdaT}[2]{
\begin{tikzpicture}[xscale=0.37, yscale=0.5, semithick, baseline=(r')]
  \style;
  \node [sml] (r') at ( 0,3) {};
  \node [sml] (r)  at ( 0,2) {};
  \node [sml] (r1) at (-1,1) {};
  \node [sml] (r2) at ( 1,1) {};
  \node [sml] (r21) at ( 1,0) {};
  \draw (r') node[above=2pt] {#2} -- (r);
  \draw (r)  node[right=2pt] {#1} -- (r1) node[below=2pt] {0}
        (r)                       -- (r2) node[right=1pt] {1} -- (r21) node[below=2pt] {0};
\end{tikzpicture}
}
\lambda_0(T) = \!\!\!\!\lambdaT{0}{1}\hspace{-3.5ex},\quad
\lambda_1(T) = \!\!\!\!\lambdaT{1}{2}\hspace{-3.5ex},\quad \text{and }\;
\lambda_2(T) = \!\!\!\!\lambdaT{2}{3}\hspace{-3.5ex}.\quad
$$

The smallest bicubic map has two vertices and three edges joining them. It is well-known that the faces of a bicubic map
can be colored using three colors so that adjacent faces have distinct
colors, say, colors 1, 2 and 3, in a counterclockwise order around
white vertices. We will assume that the root vertex is black and the
root face has color 3. There are exactly three different bicubic maps
with 6 edges and they are given in Figure~\ref{bicubic} appearing in~\cite{CKM}.

\begin{figure}[h]
  \begin{tikzpicture}[semithick, scale=0.80, baseline=(c.base), bend angle=85, >=stealth]
    \style;
    \node [wht] (a) at (0,0) {};
    \node [blk] (b) at (1,0) {};
    \node [wht] (c) at (2,0) {};
    \node [blk] (d) at (3,0) {};
    \node (n1) at (1.5,-0.8) {1};
    \node (n2) at (1.5, 0  ) {3};
    \node (n3) at (1.5, 0.8) {2};
    \node (n4) at (2.85, 1.15) {3};
    \path (a) edge[bend right, looseness=1.3] (d);
    \path (d) edge[bend right, looseness=1.3, shorten >=0.15pt, ->] (a);
    \path (b) edge[bend right, looseness=1.2] (c);
    \path (c) edge[bend right, looseness=1.2] (b);
    \path (a) edge (b) (c) edge (d);
  \end{tikzpicture}
  \qquad\quad
  \begin{tikzpicture}[semithick, scale=0.95, baseline=(n2), bend angle=85, >=stealth]
    \style;
    \node [blk] (a) at (0,0) {};
    \node [wht] (b) at (1,0) {};
    \node [wht] (c) at (0,1) {};
    \node [blk] (d) at (1,1) {};
    \node (n1) at (0.5,-0.25) {2};
    \node (n2) at (0.5, 0.50) {1};
    \node (n3) at (0.5, 1.25) {2};
    \node (n4) at (1.3, 0.50) {3};
    \path (a) edge (b) (b) edge (d) (d) edge (c) (c) edge (a);
    \path (a) edge[bend right, looseness=1.6] (b);
    \path (d) edge[bend right, looseness=1.6, shorten >=0.15pt, ->] (c);
  \end{tikzpicture}
  \qquad\quad
  \begin{tikzpicture}[semithick, scale=0.95, baseline=(n2), bend angle=85, >=stealth]
    \style;
    \node [wht] (a) at (0,0) {};
    \node [blk] (b) at (1,0) {};
    \node [blk] (c) at (0,1) {};
    \node [wht] (d) at (1,1) {};
    \node (n1) at (0.5,-0.25) {1};
    \node (n2) at (0.5, 0.50) {2};
    \node (n3) at (0.5, 1.25) {1};
    \node (n4) at (1.3, 0.50) {3};
    \path (a) edge (b) (b) edge (d) (d) edge (c) (c) edge[shorten >=0.15pt, ->] (a);
    \path (a) edge[bend right, looseness=1.6] (b);
    \path (d) edge[bend right, looseness=1.6] (c);
  \end{tikzpicture}
  \caption{All bicubic maps with 4 vertices.}\label{bicubic}
\end{figure}
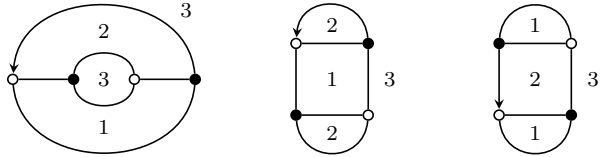

Following \cite{CKM} we will now describe a bijection between 
bicubic maps and \btrees. For any bicubic map $M$ and $i=1,2,3$, let
$\F_i(M)$ be the set of $i$-colored faces of $M$. Let
$R_1\in\F_1(M)$, $R_2\in\F_2(M)$, and $R_3\in\F_3(M)$ be the three
faces around the root vertex; in particular, $R_3$
is the root face. In addition, let $S_1\in\F_1(M)$ be the $1$-colored
face that meets the vertex that the root edge points at:
$$
\begin{tikzpicture}[semithick, scale=0.7, baseline=-0.6ex]
  \style;
  \node [blk] (r) at (0,0) {};
  \node [wht] (s) at (1,0) {};
  \path (-1,0) edge (r) edge[->] (s) edge (2,0);
  \path (r) edge (0,1);
  \path (s) edge (1,1);
  \node [wht] (s) at (1,0) {};
  \node at (-0.5, 0.5) {$R_1$};
  \node at ( 0.5, 0.5) {$R_2$};
  \node at ( 1.5, 0.5) {$S_1$};
  \node at ( 0.5,-0.5) {$R_3$};
\end{tikzpicture}
$$ 
Let us say that a face touches another face $k$ times if there are $k$
different edges each belonging to the boundaries of both faces.
Define the following two statistics:
\begin{align*}
\rootM(M) &\;\;\text{is the number of faces in $\F_1(M)$ that touch $R_3$;}\\  
\subM(M)  &\;\;\text{is the number of times $S_1$ touches $R_3$.}
\end{align*}
We say that $M$ is {\em irreducible} if $\subM(M)=1$, or, in other
words, if $S_1$ touches $R_3$ exactly once; we say that $M$ is {\em
  reducible} otherwise. We shall introduce operations on bicubic
maps that correspond to $\lambda_i$ and $\oplus$ of \btrees.  This
will induce the desired bijection $\psi$ between bicubic maps and
$\beta(0,1)$-trees.

To construct an irreducible bicubic map based on $M$, and having two
more vertices than $M$, we proceed in one of two ways. The first way (1)
corresponds to  $\lambda_i(T)$ when $i=\root(T)$; the second way (2)
corresponds to $\lambda_i(T)$ when $0\leq i < \root(T)$.

\def\mapM{ \filldraw[fill=blue!6,draw=black!50] (0,0) circle (1);
  \node[font=\small] at (0,0) {$M$}; }

\begin{itemize}
\item[(1)] We create a new $1$-colored face touching the root face
  exactly once, so $\rootM(M')=\rootM(M)+1$, by removing the root edge
  from $M$ and adding a digon that we connect to the map as in the
  picture below.
  $$
  \begin{tikzpicture}[semithick, scale=0.65, baseline=-0.6ex]
    \style;
    \mapM;
    \draw [->] (240:1) arc (240:291:1);
    \node [blk] (r) at (240:1) {};
    \node [wht] (s) at (300:1) {};
    \node at (-1.4,0) {\tiny{3}};
  \end{tikzpicture}
  \quad \longmapsto \quad M' \,=\; 
  \begin{tikzpicture}[semithick, scale=0.65, baseline=-0.6ex]
    \style;
    \mapM;
    \draw[white, thick] (240:1) arc (240:300:1);
    \draw (240:1) -- +(0,-1) coordinate (a);
    \draw[->] (240:1) -- +(0,-0.84);
    \draw (300:1) -- +(0,-1) coordinate (b);
    \path (a) edge[bend right, looseness=1.5] (b);
    \path (a) edge[bend left, looseness=1.5] (b);
    \node [wht] at (a) {};
    \node [blk] at (b) {};
    \node [blk] (r) at (240:1) {};
    \node [wht] (s) at (300:1) {};
    \node at ($ (a) !.5! (b) $) {\tiny{1}};
    \node[yshift=0.5pt] at ($ (r) !.5! (b) $) {\tiny{2}};
    \node at (1.4,0) {\tiny{3}};
  \end{tikzpicture}
  $$
\item[(2)] Assuming that $\rootM(M)=k$; that is, $M$ has $k$ (different) $1$-colored
  faces touching the root face, we can create an irreducible map $M'$
  such that $\rootM(M')=i$, where $1\leq i\leq k$.  To this end, we
  remove the root edge from $M$. Starting at the root vertex and
  counting in {\em clockwise direction}, we also remove the first edge
  of the $i$th $1$-colored face that touches the root face. In the
  picture below we schematically illustrate the case $i=3$. Next we add
  two more vertices and respective edges, and assign a new root as shown in the
  figure.
  \def\radius{1.5}
  \def\radiux{1.3}
  $$ 
  \begin{tikzpicture}[semithick, scale=0.65, bend angle=45, rotate=0,  baseline=-0.6ex] 
    \style;
    \path
    coordinate (P0) at (0*45:\radius) {}
    coordinate (P1) at (1*45:\radius) {}
    coordinate (P2) at (2*45:\radius) {}
    coordinate (P3) at (3*45:\radius) {}
    coordinate (P4) at (4*45:\radius) {}
    coordinate (P5) at (5*45:\radius) {}
    coordinate (P6) at (6*45:\radius) {}
    coordinate (P7) at (7*45:\radius) {};
    \filldraw[fill=blue!6,draw=black!50] (0,0) circle (\radius);
    \node[font=\small] at (0,0) {$M$};
    \path (P1) edge[bend left, looseness=1.5] (P2);
    \path (P3) edge[bend left, looseness=1.5] (P4);
    \path (P5) edge[bend left, looseness=1.5] (P6);
    \path (P7) edge[bend left, looseness=1.5] (P0);
    \draw [->] (P6) arc (6*45:7*45-5:\radius);
    \node at (-2,0) {\tiny{3}};
    \node at (1.5*45:\radiux) {\tiny{1}};
    \node at (3.5*45:\radiux) {\tiny{1}};
    \node at (5.5*45:\radiux) {\tiny{1}};
    \node at (7.5*45:\radiux) {\tiny{1}};
    \node [blk] at (P0) {};
    \node [wht] at (P1) {};
    \node [blk] at (P2) {};
    \node [wht] at (P3) {};
    \node [blk] at (P4) {};
    \node [wht] at (P5) {};
    \node [blk] at (P6) {};
    \node [wht] at (P7) {};
  \end{tikzpicture}
  \quad \longmapsto \quad M' \,=\!\!
  \begin{tikzpicture}[semithick, scale=0.65, bend angle=45, rotate=0,  baseline=-0.6ex] 
    \style;
    \path
    coordinate (A)  at (0*45:2*\radius) {}
    coordinate (B)  at (0*45:2.5*\radius) {}
    coordinate (Q)  at (1.7*45:\radius) {}
    coordinate (P0) at (0*45:\radius) {}
    coordinate (P1) at (1*45:\radius) {}
    coordinate (P2) at (2*45:\radius) {}
    coordinate (P3) at (3*45:\radius) {}
    coordinate (P4) at (4*45:\radius) {}
    coordinate (P5) at (5*45:\radius) {}
    coordinate (P6) at (6*45:\radius) {}
    coordinate (P7) at (7*45:\radius) {};
    \filldraw[fill=blue!6,draw=black!50] (0,0) circle (\radius);
    \node[font=\small] at (0,0) {$M$};
    \path (P6) edge[->, bend right, shorten >=2pt] (B);
    \path (P7) edge[bend right] (A);
    \path (B) edge[bend angle=90, bend right, looseness=1] (P2);
    \path (A) edge[bend angle=90, bend right, looseness=1] (Q);
    \path (A) edge (B);
    \path (P1) edge[bend left, looseness=1.5] (P2);
    \path (P3) edge[bend left, looseness=1.5] (P4);
    \path (P5) edge[bend left, looseness=1.5] (P6);
    \path (P7) edge[bend left, looseness=1.5] (P0);
    \draw[very thick, white] (Q) arc (1.7*45:2*45:\radius);
    \draw[very thick, white] (P6) arc (6*45:7*45:\radius);
    \node at (4.5*45:1.9) {\tiny{3}};
    \node at (0.5*45:1.9) {\tiny{3}};
    \node at (7.0*44.3:1.9) {\tiny{2}};
    \node at (0.5*45:3.17) {\tiny{1}};
    \node at (3.5*45:\radiux) {\tiny{1}};
    \node at (5.5*45:\radiux) {\tiny{1}};
    \node at (7.5*45:\radiux) {\tiny{1}};
    \node [blk] at (A)  {};
    \node [wht] at (B)  {};
    \node [blk] at (P0) {};
    \node [wht] at (P1) {};
    \node [blk] at (P2) {};
    \node [wht] at (P3) {};
    \node [blk] at (P4) {};
    \node [wht] at (P5) {};
    \node [blk] at (P6) {};
    \node [wht] at (P7) {};
    \node [circle,fill=white,draw=black, minimum size=2.5pt, inner sep=0pt] at (Q) {};
  \end{tikzpicture}
  $$
\end{itemize}
Any irreducible bicubic map on $n+2$ vertices can be constructed from some
bicubic map on $n$ vertices by applying operation (1) or (2) above.

We shall now describe how to create a reducible map based on
irreducible maps $M_1$, $M_2$, \ldots, $M_k$. An illustration for
$k=3$ can be found below. This corresponds to the $\oplus$-operation on \btrees.

\begin{itemize}
\item[(3)] We begin by lining up the maps $M_1$, $M_2$, \ldots,
  $M_k$. Next, in each map $M_i$, we remove the first edge (in {\em
    counter-clockwise direction}) from the root edge on the root
  face. Then we connect the maps as shown in the figure, and define
  the root edge of the obtained map to be the root edge of $M_k$.
\end{itemize}

\def\mapMi{\filldraw[fill=blue!6,draw=black!50] (0,0) circle (1);}
$$
\begin{tikzpicture}[semithick, scale=0.61, bend angle=45, rotate=0,  baseline=-0.6ex] %
  \style;
  \mapMi
  \draw [->] (180:1) arc (180:231:1);
  \draw      (240:1) arc (240:300:1);
  \node [blk] (r1) at (180:1) {};
  \node [wht] (s1) at (240:1) {};
  \node [blk] (t1) at (300:1) {};
  \path (s1) edge[bend left, looseness=1.5] (t1);
  \node [yshift=1.3] at ($ (s1) !.5! (t1) $) {\tiny{1}};
  \node[font=\small] at (0,0) {$M_1$};
  \node at (-1.2,0.7) {\tiny{3}};
  \begin{scope}[xshift=80pt]
    \mapMi
    \draw [->] (180:1) arc (180:231:1);
    \draw      (240:1) arc (240:300:1);
    \node [blk] (r2) at (180:1) {};
    \node [wht] (s2) at (240:1) {};
    \node [blk] (t2) at (300:1) {};
    \path (s2) edge[bend left, looseness=1.5] (t2);
    \node [yshift=1.3] at ($ (s2) !.5! (t2) $) {\tiny{1}};
    \node[font=\small] at (0,0) {$M_2$};
    \node at (-1.2,0.7) {\tiny{3}};
  \end{scope}
  \begin{scope}[xshift=160pt]
    \mapMi
    \draw [->] (180:1) arc (180:231:1);
    \draw      (240:1) arc (240:300:1);
    \node [blk] (r3) at (180:1) {};
    \node [wht] (s3) at (240:1) {};
    \node [blk] (t3) at (300:1) {};
    \path (s3) edge[bend left, looseness=1.5] (t3);
    \node [yshift=1.3] at ($ (s3) !.5! (t3) $) {\tiny{1}};
    \node[font=\small] at (0,0) {$M_3$};
    \node at (-1.2,0.7) {\tiny{3}};
  \end{scope}
\end{tikzpicture}
\quad \longmapsto \quad M' \,=\;
\begin{tikzpicture}[semithick, scale=0.61, bend angle=45, rotate=0,  baseline=-0.6ex] %
  \style;
  \mapMi
  \draw[->]           (180:1) arc (180:231:1);
  \draw[thick, white] (240:1) arc (240:300:1);
  \node [blk] (r1) at (180:1) {};
  \node [wht] (s1) at (240:1) {};
  \node [blk] (t1) at (300:1) {};
  \path (s1) edge[bend left, looseness=1.5] (t1);
  \node[font=\small] at (0,0) {$M_3$};
  \begin{scope}[xshift=80pt]
    \mapMi
    \draw               (180:1) arc (180:231:1);
    \draw[thick, white] (240:1) arc (240:300:1);
    \node [blk] (r2) at (180:1) {};
    \node [wht] (s2) at (240:1) {};
    \node [blk] (t2) at (300:1) {};
    \path (s2) edge[bend left, looseness=1.5] (t2);
    \node[font=\small] at (0,0) {$M_2$};
    \node at (0,-1.4) {\tiny{1}};
  \end{scope}
  \begin{scope}[xshift=160pt]
    \mapMi
    \draw               (180:1) arc (180:231:1);
    \draw[thick, white] (240:1) arc (240:300:1);
    \node [blk] (r3) at (180:1) {};
    \node [wht] (s3) at (240:1) {};
    \node [blk] (t3) at (300:1) {};
    \path (s3) edge[bend left, looseness=1.5] (t3);
    \node[font=\small] at (0,0) {$M_1$};
  \end{scope}
  \path (t1) edge[bend right=30] (s2)
        (t2) edge[bend right=30] (s3)
        (s1) edge[bend right=50, looseness=0.6] (t3);
  \node at (1.35,0.8) {\tiny{3}};
\end{tikzpicture}
$$
Any reducible bicubic map on $n$ vertices can be constructed by applying
the above operation (3) to some ordered list of irreducible bicubic maps whose
total number of vertices is $n$.

By defining operations on bicubic maps corresponding to the operations
$\lambda_i$ and $\oplus$ we have now completed the definition of the
bijection $\psi$ between bicubic maps and \btrees. See \cite{CKM} for examples of non-trivial applications of the bijection.

\section{Gray codes}

In this section, after introducing some notations, we will define a Gray code for 
\btrees\ with the same underlying tree and extend it to a cyclic Gray code for 
\btrees\ with arbitrary underlying trees. This will induce a cyclic Gray code for bicubic maps. Then we will see
that our construction can be easily extended to a wider class of $\beta$-description trees
inducing cyclic Gray codes for the corresponding to them planar maps.		

\medskip

A list $\mathcal L$ for a set of length $n$ tuples is a {\em Gray code} if $\mathcal L$ lists, with no repetitions nor omissions, the tuples in the set so that the Hamming distance between two successive tuples in $\mathcal L$
(i.e., the number of positions in which they differ) is bounded by a constant, independent of $n$.
And when we want to explicitely refer to this constant, say $k$, we call such a list a {\em $k$-Gray code}.
In addition, if the last and first tuple in $\mathcal L$ differ in the same way, then the Gray code is {\em cyclic}. 

\medskip


If $\mathcal L$  is a list, then $\overline {\mathcal L}$ is the list obtained by reversing $\mathcal L$, and
if ${\mathcal M}$ is another list, then ${\mathcal L}\circ {\mathcal M}$ is the
concatenation of the two lists.
If $\alpha$ is a tuple, then $\alpha\cdot{\mathcal L}$ 
(resp., ${\mathcal L}\cdot\alpha$) is the list obtained by appending (resp., postpending)
$\alpha$ to each tuple of $\mathcal L$, and $\alpha^k$ is the tuple obtained by concatening $k$
copies of $\alpha$.
Often we refer to a list by enumerating its elements, e.g. 
$\mathcal L=\langle \alpha_1,\alpha_2,\alpha_3,\ldots\rangle$.

Given a family $\{\mathcal L_1,\mathcal L_2,\ldots ,\mathcal L_m\}$
of $m$ lists, each $\mathcal L_i$, $1\leq i\leq m$, being a $k_i$-Gray code for a set $L_i$ of same length tuples, we define another family of lists $\{\mathcal N_1,\mathcal N_2,\ldots ,\mathcal N_m\}$ as follows:
$\mathcal N_1$ is simply the list $\mathcal L_m$,
and for $2\leq i\leq m$,

$$\mathcal N_i=e_1\cdot {\mathcal N_{i-1}}\circ e_2\cdot \overline{\mathcal N_{i-1}}\circ 
e_3\cdot\mathcal N_{i-1}\ldots,$$
where $\langle e_1,e_2,e_3,\dots ,e_j\rangle$ is the list $\mathcal L_{m-i+1}$,
and the last term of the concatenation defining $\mathcal N_i$ is either
$e_j\cdot {\mathcal N_{i-1}}$ or $e_j\cdot \overline{\mathcal N_{i-1}}$,
depending on $j$ being odd or even.

It is routine to prove the following proposition that we will use later.

\begin{proposition}
With the notations above, $\mathcal N_m$ is a $k$-Gray code for the product set 
$L_1\times L_2 \times \dots\times L_m$, where $k=\max\{k_1,k_2,\dots ,k_m\}$.
\label{product_set}
\end{proposition}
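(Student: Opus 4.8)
The plan is to argue by induction on $m$, the number of factors, after isolating the single ``reflected product'' operation that underlies the construction. Given a list $\mathcal A=\langle a_1,\dots,a_p\rangle$ for a set $A$ of equal-length tuples and a list $\mathcal B$ for a set $B$ of equal-length tuples, write $\mathcal A\otimes\mathcal B$ for the list
$a_1\cdot\mathcal B\circ a_2\cdot\overline{\mathcal B}\circ a_3\cdot\mathcal B\circ\cdots$, alternating $\mathcal B$ and $\overline{\mathcal B}$ exactly as in the statement. Then $\mathcal N_1=\mathcal L_m$ and $\mathcal N_i=\mathcal L_{m-i+1}\otimes\mathcal N_{i-1}$ for $2\le i\le m$, so it suffices to prove the following claim: if $\mathcal A$ is a $k_A$-Gray code for $A$ and $\mathcal B$ a $k_B$-Gray code for $B$, then $\mathcal A\otimes\mathcal B$ is a $\max\{k_A,k_B\}$-Gray code for $A\times B$. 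The proposition then follows by induction, the Gray-code constant of $\mathcal N_i$ being $\max\{k_{m-i+1},\dots,k_m\}$.

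First I would check completeness and non-repetition for the $\otimes$-claim (the base case being vacuous): each block $a_t\cdot(\pm\mathcal B)$ lists exactly the tuples of $\{a_t\}\times B$, once each, since reversing a list changes neither its underlying set nor the multiplicities; and as $t$ ranges over $1,\dots,p$ the prefixes $a_t$ run over $A$ once each, so $\mathcal A\otimes\mathcal B$ lists $A\times B$ with no repetition and no omission. All its tuples have the common length $|a_1|+|b_1|$.

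Next I would bound the Hamming distance between consecutive tuples. Two consecutive tuples inside the same block $a_t\cdot(\pm\mathcal B)$ share the prefix $a_t$, and their suffixes are consecutive in $\mathcal B$ or in $\overline{\mathcal B}$, hence differ in at most $k_B$ positions. Two consecutive tuples straddling a boundary between $a_t\cdot\mathcal B$ and $a_{t+1}\cdot\overline{\mathcal B}$ have suffixes equal to the last tuple of $\mathcal B$ and the first tuple of $\overline{\mathcal B}$ respectively, which are one and the same tuple; so they differ only in the prefix, in at most $k_A$ positions because $a_t$ and $a_{t+1}$ are consecutive in $\mathcal A$. The symmetric situation at a boundary between $a_t\cdot\overline{\mathcal B}$ and $a_{t+1}\cdot\mathcal B$ is identical (both suffixes equal the first tuple of $\mathcal B$). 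In every case the distance is at most $\max\{k_A,k_B\}$, which establishes the claim. Feeding $\mathcal A=\mathcal L_{m-i+1}$ and $\mathcal B=\mathcal N_{i-1}$ into it, together with the inductive hypothesis that $\mathcal N_{i-1}$ is a $\max\{k_{m-i+2},\dots,k_m\}$-Gray code for $L_{m-i+2}\times\cdots\times L_m$, yields the claim for $\mathcal N_i$; taking $i=m$ proves the proposition.

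I do not expect a genuinely hard step here; the only care needed is the bookkeeping at the block boundaries — in particular, verifying that consecutive blocks are oriented so that the suffix endpoint shared across a seam is truly shared. I would also note that the parity condition on the last term of the concatenation (whether it is $\mathcal N_{i-1}$ or $\overline{\mathcal N_{i-1}}$) plays no role in this argument: it merely forces the seams to alternate in the correct orientation, which is exactly what the distance bound above uses, and it would matter only if one were additionally asserting cyclicity, which this proposition is not.
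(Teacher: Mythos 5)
Your proof is correct: the reflected-product claim, the completeness/non-repetition check, and the seam argument (shared suffix across each boundary, so only the prefix changes, by at most $k_A$) are exactly the standard argument the paper has in mind when it declares the proposition ``routine'' and omits the proof. Nothing is missing, and your closing remark that the alternation (equivalently, the parity of the last block) is what makes the seams share a suffix, while cyclicity is not being claimed, is accurate.
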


\subsection{Gray coding \btrees\ with the same underlying tree}\label{btrees-same-underlying}

Recall that by definitions, the label of the root of a \btree\ is uniquely determined by the labels of its children.
In what follows, for convenience, we assume that the root of any \btree\ is labeled by $*$. That is, we amend (2) in Definition~\ref{beta01-tree}, obtaining a class of trees, still called \btrees\ by us, which are in one-to-one correspondence with \btrees\ defined either in Definition~\ref{betaAB-tree} or Definition~\ref{beta01-tree}. The rationale for (again!) updating slightly our previous definitions is in allowing \btrees\ to be on distance 1 from each other in the sense specified in Definition~\ref{distance-same-underlying} below. 

We are also interested in the following class of labeled trees, which are essentially \btrees, but where the root is treated as any other internal vertex.

\begin{definition}\label{beta01'-tree} A $\beta'(0,1)$-tree is a rooted plane tree whose vertices are labeled with nonnegative integers such that
\begin{enumerate}
\item leaves have label $0$;
\item the label of any other vertex exceeds the sum of its children's
  labels by at most 1.
\end{enumerate}
\end{definition}

Note that any \btree\ discussed in Definition~\ref{beta01-tree} is a $\beta'(0,1)$-tree.

\begin{definition}\label{def-underl} Let $T$ be a \btree\ or a $\beta'(0,1)$-tree. Then  $u(T)$ denotes the underlying rooted tree, that is, the tree obtained by removing all labels in $T$. In other words, $u(T)$ gives the shape of $T$. \end{definition}

For example, if $T$ is the rightmost \btree\ in the top row in Figure~\ref{beta01}, then $u(T)$ is given by
$$\treb{ }{ }{ }{ }$$

\begin{definition} Let $T$ be a \btree\ or a $\beta'(0,1)$-tree with $n$ vertices. 
We let $\ell(T)$ denote the $n$-tuple of $T$'s labels obtained by traversing $T$ by depth first algorithm using the leftmost option and reading each label exactly once. \end{definition}

For example, for the tree $T$ in Figure~\ref{fig:tree}, $\ell(T)=(4,0,0,1,0,2,1,3,2,1,0,0)$;
see also Table \ref{co_example} where the roots are labeled by *.
Thus, $\ell(T)$ is an encoding of a given $\beta(0,1)$-tree $T$ in the form of a tuple.
Note that this encoding disregards the shape of the tree.

\begin{definition}\label{distance-same-underlying} For \btrees\ (with the root labeled by *), or $\beta'(0,1)$-trees, $T_1$ and $T_2$ such that $u(T_1)=u(T_2)$, the {\em distance} $d(T_1,T_2)$ between the trees is the number of positions in which $\ell(T_1)$ and $\ell(T_2)$ differ
(or equivalently, the Hamming distance between  $\ell(T_1)$ and $\ell(T_2)$). \end{definition}

For example, keeping in mind that we label the root of a \btree\ by $*$, the distance between the first and the forth trees in the top row in Figure~\ref{beta01} is 2, while the distance between the first and the second trees in the bottom row in that figure is 1.

\begin{definition}\label{def-L-T} For a $\beta'(0,1)$-tree $T$, we let $L(T)$ denote the set 
of encodings of all $\beta'(0,1)$-trees obtained by labeling properly $u(T)$.
\end{definition}

\begin{lemma}\label{lem1} For any $\beta'(0,1)$-tree $T$ there is a $1$-Gray code for $L(T)$. \end{lemma}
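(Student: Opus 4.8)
The plan is to induct on the structure of the tree $T$, using the decomposition of $\beta'(0,1)$-trees into irreducible pieces joined by the $\oplus$-operation, together with Proposition~\ref{product_set} to handle the reducible case. First I would observe that for a $\beta'(0,1)$-tree $T$, the set $L(T)$ of valid labelings of the shape $u(T)$ has a recursive product structure: if the root of $u(T)$ has children that are roots of shapes $u(T_1),\dots,u(T_m)$, then a valid labeling of $u(T)$ is determined by choosing valid labelings of the subtrees $u(T_1),\dots,u(T_m)$ together with a valid root label, where "valid" for the root means any integer between $0$ and $1+\sum_i r_i$, with $r_i$ the chosen label at the root of the $i$-th subtree. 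The subtlety is that the admissible set for the root label depends on the labels chosen below it, so $L(T)$ is not literally a Cartesian product; I would address this by first proving the base/atomic case and the "single new root on top" case, and then invoking Proposition~\ref{product_set} in the right order.

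The key steps, in order, are as follows. \emph{Step 1 (leaves and the atomic building block).} If $T$ is a single vertex, $L(T)$ is a one-element list and there is nothing to prove. \emph{Step 2 (adding a root on a single child, i.e.\ the $\lambda_i$-situation).} Suppose $u(T)$ is obtained from a shape $S$ by adding a new root joined to the old root. If the old root ranges over a set of labelings with root-label taking values in some interval (or more generally some set of values $r$), then for each such labeling the new root can take any value in $\{0,1,\dots,r+1\}$. Here I would build the Gray code for $L(T)$ directly: fix an inductively given $1$-Gray code $\mathcal{M}$ for the labelings of $S$; list the labelings of $S$ in that order, and for each one append the new-root coordinate running up $0,1,\dots,r+1$ and then, for the next labeling of $S$, running back down. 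Because consecutive labelings of $S$ differ in only one position (a child's label changes by a bounded amount — in fact, as we will see, we can arrange that the root label of $S$ changes by at most $1$ between consecutive entries), the appended coordinate is off by at most $1$ at each "turn", so the total change between consecutive tuples is $1$. \emph{Step 3 (the $\oplus$-situation, several children under the root).} Here the labelings of the several subtrees vary independently, \emph{given} the root label; I would peel off the root coordinate as in Step 2 and apply Proposition~\ref{product_set} to the subtrees $u(T_1),\dots,u(T_m)$, each of which has a $1$-Gray code by induction, obtaining a $1$-Gray code for the product, then re-attach the root coordinate by the same up-down scanning trick. \emph{Step 4.} Combine: every $\beta'(0,1)$-tree shape is built from single vertices by iterating Steps 2 and 3, so induction gives a $1$-Gray code for $L(T)$ for every $T$.

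The main obstacle I anticipate is the interaction between the root coordinate and the labels below it: the range of admissible root labels is $\{0,\dots,1+\sum r_i\}$, and this upper bound \emph{moves} as the subtree labelings change along the inductive Gray code. The up-down scanning of the root coordinate only yields Hamming distance $1$ between consecutive tuples if, whenever the subtree part changes (by one position, by induction), the root coordinate does \emph{not} also need to change — but it might, if the old root value exceeded the new admissible maximum. To control this I would need to strengthen the inductive hypothesis: not merely "there is a $1$-Gray code for $L(T)$", but "there is a $1$-Gray code for $L(T)$ in which, in addition, the sequence of root labels is itself a Gray-code-like sequence (changes by at most $1$ at each step and, ideally, starts and ends at prescribed values such as $0$)". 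With such a strengthened hypothesis the moving-maximum issue becomes benign: when the admissible maximum drops by $1$, the root coordinate is at a controlled value and the up-down pattern can be synchronized so that only one coordinate changes. Carefully formulating and maintaining this strengthened hypothesis through Steps 2 and 3 — in particular checking that the concatenation in Proposition~\ref{product_set} can be arranged to keep the root-label sequence well-behaved — is where the real work of the proof lies; the rest is bookkeeping.
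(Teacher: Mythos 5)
Your overall skeleton is the same as the paper's: induct on the number of vertices, get $1$-Gray codes for the subtrees $L(T_1),\dots,L(T_k)$ by induction, combine them with Proposition~\ref{product_set} into a $1$-Gray code $\langle\alpha_1,\alpha_2,\dots\rangle$ for $L(T_1)\times\cdots\times L(T_k)$, and then prepend the root coordinate. The gap is precisely in the step you yourself flag as ``where the real work of the proof lies'': you never resolve the moving-maximum problem, and the resolution you sketch (a strengthened inductive hypothesis forcing the root labels to change by at most $1$ in \emph{value}, with prescribed starting and ending values, maintained through the product construction) is neither carried out nor clearly achievable; in particular you do not check that the concatenation in Proposition~\ref{product_set} preserves such a value-controlled property, nor that the ``turns'' of your up-down scan can always be synchronized with the steps at which the admissible maximum drops. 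As written, the proposal is an outline of the right reduction plus an unproven auxiliary claim, not a proof.

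The missing idea is that the distance here is \emph{Hamming} distance, so within a block where the subtree labels $\alpha_j$ are frozen, the root coordinate may jump by any amount at cost $1$; hence you are free to order the admissible root labels $0,1,\dots,m(\alpha_j)$ in any sequence you like, and the only thing that matters is that each block begins and ends at a value that is admissible for \emph{every} labeling of the subtrees. The values $0$ and $1$ have exactly this property (the root may always exceed the sum of its children's labels by at most $1$, and that sum is at least $0$). The paper therefore lists the root labels as $\gamma(u)=\langle(0),(u),(u-1),\dots,(2),(1)\rangle$ for odd-indexed blocks and $\delta(u)=\langle(1),(u),(u-1),\dots,(2),(0)\rangle$ for even-indexed ones, with $u=m(\alpha_j)$: inside a block only the first position changes, and at the junction between consecutive blocks the root label is the same ($1$ then $1$, or $0$ then $0$), so only the single position where $\alpha_j$ and $\alpha_{j+1}$ differ changes. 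This kills the moving-maximum issue outright, with no strengthened induction and no monotone up-down scan. If you replace your Step~2/Step~3 attachment of the root coordinate by this $\gamma/\delta$ device, your argument becomes the paper's proof.
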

Before giving a formal proof of Lemma~\ref{lem1}, we explain its general idea, which is presented graphically in Figure~\ref{two_figures}(a). 
Assuming the existence of a $1$-Gray code for smaller trees, we can extend such a code to trees obtained by adding the new root. More precisely, each vertex in Figure~\ref{two_figures}(a) corresponds to a $\beta'(0,1)$-tree having a fixed shape (that is, underlying tree).
Vertices on the same vertical line correspond to $\beta'(0,1)$-trees that differ only in the root label: the higher vertex is, the larger root label it corresponds to. Note that each vertical line must contain at least the vertices corresponding to root labels~$0$ and~$1$, but it may or may not contain other vertices. 

Further, $a_1$, $a_2$, etc. in this figure form a Hamiltonian path corresponding to the 1-Gray code for the trees with root label equal~$0$, whose existence we assumed. Also, $b_1$, $b_2$, etc. is such a path for the trees with root label equal~$1$. Thus, the trees corresponding to $a_i$ and $b_i$ differ only in the root label. The desired Hamiltonian path
through all the $\beta'(0,1)$-trees (corresponding to the 1-Gray code for $L(T)$)
presented schematically in Figure~\ref{two_figures}(a) begins at $a_1$ and goes in the direction of the arrow.   \\

\noindent
{\it Proof of Lemma \ref{lem1}.}
We proceed by induction on the number $v$ of vertices in $T$. The base cases, 
$v=1$ (the single vertex $\beta'(0,1)$-tree) and $v=2$ (two single edge $\beta'(0,1)$-trees, which are on distance 1 from each other) obviously hold.  

Suppose now that $v\geq 3$,  and the children of the root are the roots of subtrees $T_1,T_2,\ldots,T_k$ from left to right, where $k\geq 1$. Each $T_i$ is a $\beta'(0,1)$-tree and by induction hypothesis, $L(T_i)$ has a  $1$-Gray code. But then, by Proposition~\ref{product_set},
$$L(T_1)\times L(T_2)\times \cdots \times L(T_k)$$ 
also has a $1$-Gray code, and it can be extended to a $1$-Gray code of $L(T)$  obtained by adding the new leftmost coordinate corresponding to $T$'s root to each entry of $L(T_1)\times L(T_2)\times \cdots \times L(T_k)$ as explained below. 

For an integer $u\geq 0$ we define two lists of $1$-tuples:
\begin{itemize}
\item $\gamma(u)=\langle (0),(u),(u-1),\dots, (2),(1)\rangle$, and
\item $\delta(u)=\langle (1),(u),(u-1),\dots, (2),(0)\rangle$.
\end{itemize}
In particular,  $\gamma(1)$
and $\delta(1)$ are the lists $\langle (0), (1)\rangle$ and 
$\langle (1), (0)\rangle$, respectively.

Let $\langle \alpha_1,\alpha_2,\ldots  \rangle$ be the $1$-Gray code list 
for $L(T_1)\times L(T_2)\times \cdots \times L(T_k)$, so that 
each $\alpha_j$ is the concatenation of $k$ tuples corresponding to the labels of 
the vertices of the trees $T_1,T_2,\ldots,T_k$, and let $m(\alpha_j)$ be the sum 
of the labels of the roots of these trees plus one.
In other words, $m(\alpha_j)$ is the maximal value of $x$, such that 
$(x)\cdot \alpha_j$ is a proper labeling of $T$. Thus $m(\alpha_j)\geq 1$
and  $m(\alpha_j)=1$ if and only if the root of each $T_i$ is labeled by $0$.
Finally, let $\mathcal M$ be the list defined as 

$$
\mathcal M  =  \mathcal M_1\circ  \mathcal M_2\circ \mathcal M_3\circ\cdots
$$
\noindent
with 
$$
\mathcal M_j=\left\{ \begin {array}{cccc}
\gamma(m(\alpha_j))\cdot\alpha_j	& {\rm if\ } j {\rm\ is\ odd},\\
\delta(m(\alpha_j))\cdot\alpha_j	& {\rm if\ } j {\rm\ is\ even}.       
\end {array}
\right.
$$

Clearly, the underlying set of $\mathcal M$ is $L(T)$. In addition $\mathcal M$ is a $1$-Gray code:
throughout each list $\mathcal M_j$ successive tuples differ in the first position, and the last
element of $\mathcal M_j$ differ from the first element of $\mathcal M_{j+1}$ as 
$\alpha_j$ differ from $\alpha_{j+1}$, that is in a single position.

Thus, $L(T)$ has a $1$-Gray code and the statement is proved by induction. 
\hfill \ensuremath{\Box}

\begin{theorem}\label{Z(T)-Hamilt-path} There exists a $1$-Gray code for \btrees\ having the same underlying tree. \end{theorem}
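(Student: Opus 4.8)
The plan is to reduce the statement to Lemma~\ref{lem1} together with Proposition~\ref{product_set}, exploiting the fact that in a \btree\ (with our current convention) the root carries no information, since it is labeled by $*$. Fix an underlying rooted plane tree $\tau$. If $\tau$ consists of a single vertex, there is exactly one \btree\ of that shape and the claim is vacuous, so I would assume the root of $\tau$ has $k\ge 1$ children and let $\tau_1,\dots,\tau_k$ be the rooted subtrees hanging from them, in left-to-right order.

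The key observation to record first is a separation-of-variables claim: specifying a \btree\ $T$ with $u(T)=\tau$ amounts exactly to specifying, \emph{independently}, a $\beta'(0,1)$-tree on each $\tau_i$. Indeed, deleting the root of $T$ (whose label $*$ is irrelevant) leaves the labeled subtrees rooted at the children of the root; condition~(3) of Definition~\ref{beta01-tree} applied to those children and their descendants, together with condition~(1) for leaves, is precisely the requirement that each such subtree be a $\beta'(0,1)$-tree in the sense of Definition~\ref{beta01'-tree}. Conversely, any choice of $\beta'(0,1)$-trees on $\tau_1,\dots,\tau_k$ reassembles, after restoring the root label $*$, into a \btree\ of shape $\tau$, and no constraint links the different $\tau_i$'s once the root label is freed. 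Moreover, under the encoding $\ell$ the first coordinate (the root's $*$) is constant across all these trees, so the distance between two \btrees\ of shape $\tau$ equals the Hamming distance between the corresponding tuples in $L(\tau_1)\times\cdots\times L(\tau_k)$.

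With this identification, the argument is immediate: by Lemma~\ref{lem1} each $L(\tau_i)$ admits a $1$-Gray code, hence by Proposition~\ref{product_set} the product set $L(\tau_1)\times\cdots\times L(\tau_k)$ admits a $1$-Gray code; prepending the constant symbol $*$ to every tuple of that list yields a $1$-Gray code for the set of \btrees\ with underlying tree $\tau$, as required.

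Since this is essentially a bookkeeping reduction, I do not anticipate a genuine obstacle. The only point requiring care is the clean separation claim of the second paragraph — checking that freeing the root label really does decouple the subtrees, and that the single-vertex and $k=1$ boundary cases are handled — after which everything is supplied by the two results already established.
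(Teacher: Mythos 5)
Your argument is correct and coincides with the paper's own proof: the paper likewise notes that each subtree $T_i$ hanging from the $*$-labeled root is a $\beta'(0,1)$-tree, applies Lemma~\ref{lem1} to each $L(T_i)$, combines these via Proposition~\ref{product_set}, and identifies $L(T)$ with $\{(*)\}\times L(T_1)\times\cdots\times L(T_k)$. Your extra care in spelling out the separation-of-variables claim and the single-vertex case is harmless detail that the paper leaves implicit.
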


\begin{proof} Suppose that the root of a \btree\ $T$, labeled by $*$, has subtrees $T_1,\ldots,T_k$, where $k\geq 1$. Each $T_i$ is a $\beta'(0,1)$-tree, and thus, by Lemma~\ref{lem1}, there is a $1$-Gray code for each $L(T_i)$. But then, by Proposition \ref{product_set},
there is also a $1$-Gray code for $L(T_1)\times L(T_2)\times \cdots \times L(T_k)$ leading to the fact that $L(T)=\{(*)\}\times L(T_1)\times L(T_2)\times \cdots \times L(T_k)$ has a $1$-Gray code, as desired.\end{proof}

\begin{figure}
\begin{tabular}{cc}
\psfig{figure=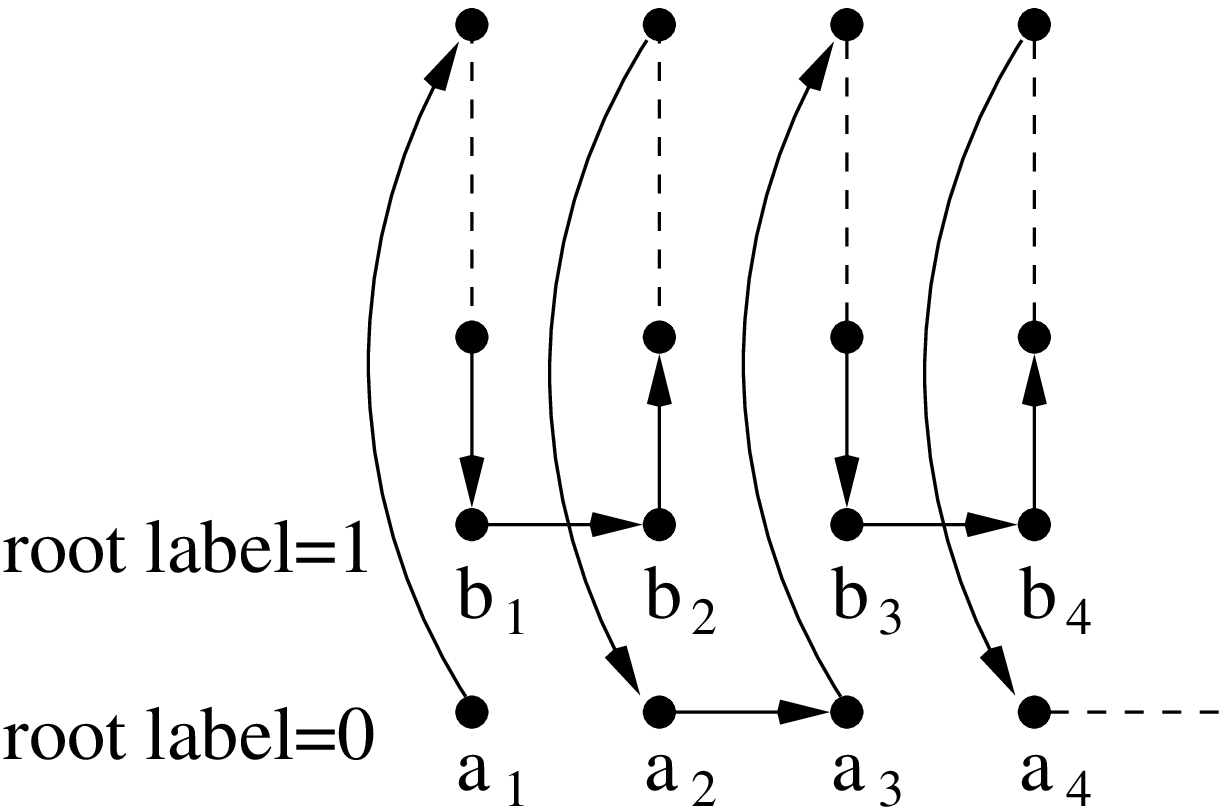,height=3cm}    &
\psfig{figure=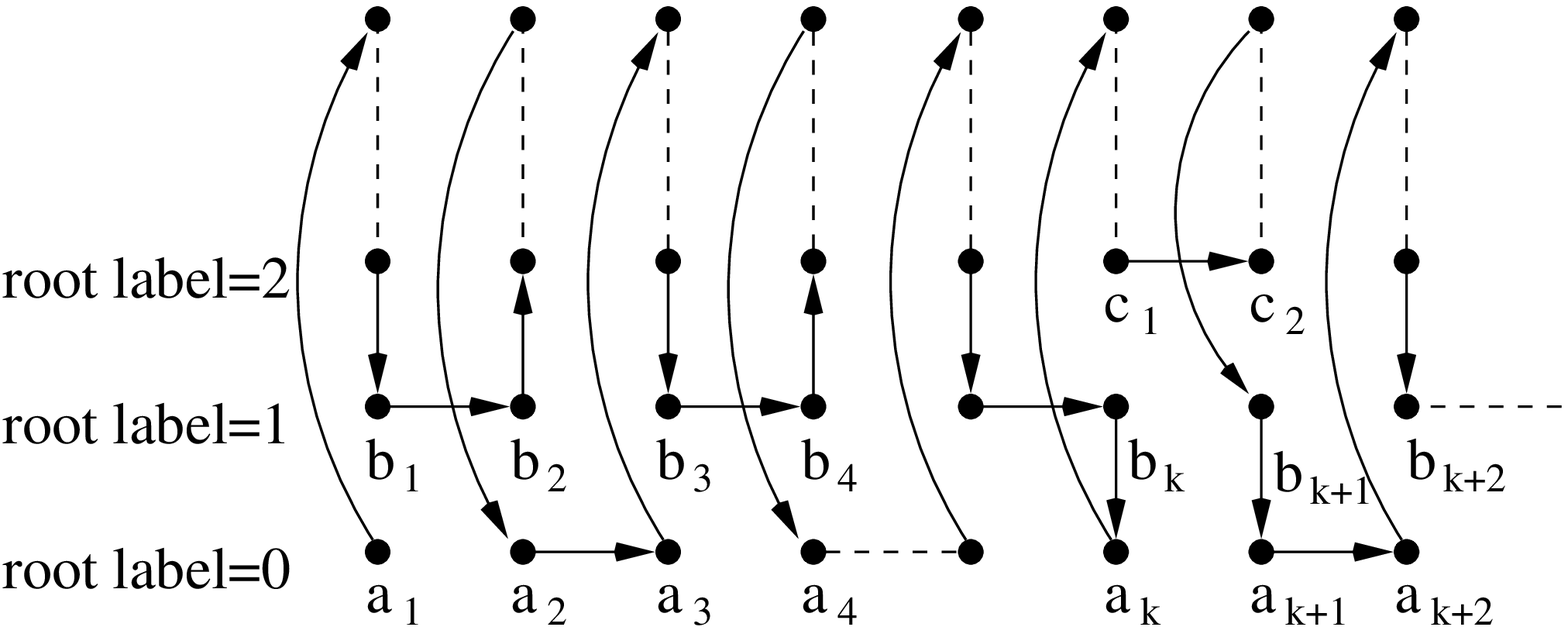,height=3cm}   \\
   (a) & (b)  \\
\end{tabular}
\caption{\label{two_figures}
Schematic approach in the proof of (a) Lemma \ref{lem1}, and (b) Lemma \ref{lem1-gen}.}
\end{figure}

Note that generally speaking the 1-Gray codes in Lemma \ref{lem1} and Theorem \ref{Z(T)-Hamilt-path} are not cyclic, and 
below we discuss how cyclic 1-Gray codes in this context can be obtained.

\subsection{Gray coding cyclicly \btrees\ with the same underlying tree}\label{btrees-same-underlying-cyclic}

The results presented in the previous subsection can be generalized to cyclic Gray codes in question. The goal of this subsection is to justify this, and in contrast with the previous subsection, 
here the proofs will be rather existential than constructive.
First note that Proposition~\ref{product_set} can be generalized to the following proposition that is easy to prove directly, but also it follows from more general results presented in~\cite{DPP}. 

\begin{proposition}
Suppose that $L_1,L_2,\ldots,L_m$ are sets, and 
each of them is a set of same length tuples which is 
either a singleton, or there is a cyclic $1$-Gray code for it. Then there is a cyclic $1$-Gray code for the product set 
$L_1\times L_2 \times \dots\times L_m$.
\label{product_set_cyclic}
\end{proposition}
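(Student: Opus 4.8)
\textbf{Proof plan for Proposition~\ref{product_set_cyclic}.}

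The plan is to prove this by induction on $m$, the number of factor sets. The base case $m=1$ is immediate: $L_1$ is either a singleton (hence has a trivial cyclic $1$-Gray code, a list of length one whose ``first'' and ``last'' entries coincide and differ in zero positions) or it comes equipped with a cyclic $1$-Gray code by hypothesis. For the inductive step, suppose the claim holds for $m-1$ factors, so that $L_2\times L_3\times\cdots\times L_m$ admits a cyclic $1$-Gray code $\mathcal{C}=\langle \beta_1,\beta_2,\ldots,\beta_N\rangle$ (where, if all of $L_2,\ldots,L_m$ are singletons, $\mathcal C$ is the singleton list, which again counts as cyclic). We then build a cyclic $1$-Gray code for $L_1\times(L_2\times\cdots\times L_m)$ by the standard ``snake'' construction: traverse $\mathcal{C}$ forwards holding the first coordinate fixed at the initial element $e_1$ of $L_1$, then move to $e_2$ and traverse $\overline{\mathcal{C}}$, then to $e_3$ and traverse $\mathcal{C}$, and so on, alternating direction. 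Within each block successive tuples differ only in the last $m-1$ coordinates by a single position (since $\mathcal{C}$ is a $1$-Gray code), and at the junction between block $i$ and block $i+1$ the last $m-1$ coordinates agree (because one block ends and the next begins at the same element of $\mathcal C$, thanks to the reversal) while the first coordinate changes by one position as we pass from $e_i$ to $e_{i+1}$.

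The one genuinely new point compared to Proposition~\ref{product_set} is closing the cycle, and this is where one has to be slightly careful. If $L_1$ is a singleton, there is nothing to do: the list is just $e_1\cdot\mathcal{C}$, and it is cyclic because $\mathcal C$ is. If $L_1$ has a cyclic $1$-Gray code $\langle e_1,e_2,\ldots,e_p\rangle$ with $p\geq 2$, we must ensure the last tuple of the whole list differs from the first in a single position. The first tuple is $e_1\cdot\beta_1$. The last block corresponds to $e_p$; it is traversed forwards if $p$ is odd and backwards if $p$ is even, so its final tuple is $e_p\cdot\beta_N$ when $p$ is odd and $e_p\cdot\beta_1$ when $p$ is even. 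In the even case the last tuple is $e_p\cdot\beta_1$, which differs from $e_1\cdot\beta_1$ only in the first coordinate, and there $e_p$ and $e_1$ differ in one position because $\langle e_1,\ldots,e_p\rangle$ is \emph{cyclic}; so the whole list is a cyclic $1$-Gray code. When $p$ is odd this naive alternation fails to close up, and the fix is the usual one for odd-length outer lists: since $N\geq 1$, if $N=1$ then $L_2\times\cdots\times L_m$ is a singleton and the list $\langle e_1,\ldots,e_p\rangle\cdot\beta_1$ is already a cyclic $1$-Gray code; if $N\geq 2$, one instead ``rotates'' the inner code so that consecutive blocks are glued at alternating endpoints $\beta_1$ and $\beta_2$ of $\mathcal C$ — concretely, in block $i$ use the inner list starting and ending appropriately among the cyclic rotations of $\mathcal{C}$ — which is possible precisely because $\mathcal C$ is cyclic, and this makes the construction close regardless of the parity of $p$.

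The main obstacle, then, is purely the bookkeeping of the endpoints in this odd/even case analysis; there is no conceptual difficulty, and indeed the whole statement is a special case of the general product constructions for cyclic Gray codes in~\cite{DPP}, to which we refer the reader for a fully detailed treatment. We therefore only sketch the argument here. \hfill \ensuremath{\Box}
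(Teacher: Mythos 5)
Your overall snake framework, the treatment of singleton factors, and the even-$p$ case are fine, and your closing appeal to \cite{DPP} is in fact all the paper itself does (it declares the statement ``easy to prove directly'' and cites \cite{DPP} without giving details). But your direct argument has a genuine gap precisely where Proposition~\ref{product_set_cyclic} goes beyond Proposition~\ref{product_set}, namely the case where the cyclic code $\langle e_1,\dots,e_p\rangle$ for $L_1$ has odd length, and this case is not ``purely bookkeeping''. As literally described, gluing consecutive blocks at alternating endpoints $\beta_1,\beta_2$ cannot close up: with $p$ blocks there are $p$ glue points arranged cyclically, and two values cannot alternate around an odd cycle. Worse, no choice of rotations repairs this when the inner code $\mathcal C=\langle\beta_1,\dots,\beta_N\rangle$ has even length $N$: if each $e_i$ heads one contiguous block whose inner part is a rotation of $\mathcal C$ or of $\overline{\mathcal C}$, then that inner part is a Hamiltonian path along the cycle $\mathcal C$, so its first and last entries are neighbours on $\mathcal C$; writing $f_i$ for the first inner entry of block $i$, the gluing and closing conditions force $f_i$ and $f_{i+1}$ (indices mod $p$) to be adjacent on $\mathcal C$, so $f_1,\dots,f_p$ is a closed walk of odd length in the even cycle $C_N$, which is bipartite --- impossible. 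A concrete instance: $p=3$ and an inner cyclic code of length $2$ or $4$; the product does have a cyclic $1$-Gray code, but no cyclic code of it visits each first coordinate in a single contiguous block, so your scheme cannot produce it.

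The missing idea is to abandon ``one full inner sweep per element of $L_1$'' in the odd case and set one inner tuple aside, exactly as in the standard Hamiltonian cycle of the torus $C_p\,\square\,C_N$: snake through $\beta_1,\dots,\beta_{N-1}$ (alternately forwards and backwards) for $e_1,e_2,\dots,e_p$; since $p$ is odd the last of these partial blocks ends at $e_p\cdot\beta_{N-1}$; then continue $e_p\cdot\beta_N,\ e_{p-1}\cdot\beta_N,\ \dots,\ e_1\cdot\beta_N$, which closes to the starting tuple $e_1\cdot\beta_1$ because $\beta_N$ and $\beta_1$ differ in one position --- it is here, not in your rotation trick, that the cyclicity of the inner code is genuinely used. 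With this replacement (and your handling of singletons and even $p$) the induction goes through; alternatively one can do as the paper does and simply quote the Hamiltonicity results of \cite{DPP}.
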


In what follows we will need the following easy to understand facts. 

\begin{Fact} 
\label{fact}
Let $\mathcal G$ be a $1$-Gray code for a set of length $k$ tuples.
\begin{itemize}
\item If $r=(r_1,\ldots ,r_k)$ and $s=(s_1,\ldots ,s_k)$ are consecutive tuples in $\mathcal G$,
then there are no $i$ and $j$,  $1\leq i\neq j\leq k$, such that $r_i\neq s_i$ and $r_j\neq s_j$.
\item If $t=(t_1,\ldots ,t_k)$ is a third tuple, and $r$, $s$ and $t$ are consecutive 
(in this order) in $\mathcal G$ and there is an $i$ with $s_i\notin \{r_i,t_i\}$, then
$r_i\neq t_i$.
\end{itemize}
\end{Fact}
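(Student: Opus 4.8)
The statement to establish is Fact~\ref{fact}, which collects two elementary consequences of the definition of a $1$-Gray code. The plan is to unwind the definitions directly; no induction or auxiliary construction is needed, so the ``proof'' is really just a matter of organizing two short arguments.

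For the first bullet, I would argue by contradiction. Suppose $r$ and $s$ are consecutive in $\mathcal G$ and there exist two distinct indices $i,j$ with $r_i\neq s_i$ and $r_j\neq s_j$. Then $r$ and $s$ differ in at least two positions, so their Hamming distance is at least $2$. But a $1$-Gray code by definition lists tuples so that the Hamming distance between any two successive tuples is exactly $1$ (it is a $k$-Gray code with $k=1$, and consecutive distinct tuples must differ somewhere, hence in exactly one position). This contradiction proves the claim. The only subtlety worth a sentence is that consecutive tuples in a Gray code are distinct (a list enumerates a set ``with no repetitions''), so the distance is at least $1$, and combined with the upper bound it is exactly $1$.

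For the second bullet, suppose $r,s,t$ are consecutive in this order in $\mathcal G$, and fix an index $i$ with $s_i\notin\{r_i,t_i\}$; I want to conclude $r_i\neq t_i$. By the first bullet applied to the consecutive pair $(r,s)$, the tuples $r$ and $s$ agree in every coordinate except one, and since $s_i\neq r_i$ that exceptional coordinate is precisely $i$; hence $r$ and $s$ agree outside position $i$. Likewise, applying the first bullet to $(s,t)$ and using $s_i\neq t_i$, the tuples $s$ and $t$ agree outside position $i$. Now suppose for contradiction that $r_i=t_i$. Then $r$ and $t$ agree at position $i$ as well as at every position outside $i$ (each equals the corresponding coordinate of $s$ there), so $r=t$, contradicting the fact that $\mathcal G$ lists its tuples without repetition (an item cannot reappear two steps later). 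Therefore $r_i\neq t_i$, as desired.

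There is essentially no hard part here; the ``main obstacle'' is merely to be careful about which facts are being invoked — specifically that a $1$-Gray code has \emph{exactly} unit Hamming distance between consecutive entries (distinctness plus the bound), and that the list is repetition-free, so that deducing $r=t$ is genuinely a contradiction. Everything else is a direct coordinate-by-coordinate comparison.
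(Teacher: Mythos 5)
Your proof is correct, and since the paper states Fact~\ref{fact} without proof (calling it an ``easy to understand'' fact), your direct definitional argument -- consecutive tuples are distinct and at Hamming distance at most~$1$, hence differ in exactly one position, and repetition-freeness rules out $r=t$ in the second bullet -- is precisely the intended justification. Nothing further is needed.
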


As mentioned before, the Gray code defined in Lemma \ref{lem1} is not necessarily cyclic.
However, if the number of $a_i$s in the construction of the Hamiltonian path $P$ in Figure \ref{two_figures}(a) is even, then the assumption of existence of a Hamiltonian cycle (or equivalently, of a cyclic $1$-Gray code) for smaller $\beta'(0,1)$-trees would give us the cyclicity of the path $P$, and so, the cyclicity of the Gray code defined in Lemma \ref{lem1}.
Indeed, in this case the first and the last vertices in $P$ are $a_1$ and $a_m$, where $m$ is 
the maximum index, and $a_1$ and $a_m$ are connected by an edge.  
 
Thus, the difficult case is when the number of $a_i$s is odd, which is possible (e.g. there are five $\beta'(0,1)$-trees having the shape of a path on three vertices).
To handle this situation, we use the idea presented in Figure~\ref{two_figures}(b). Namely, we will prove that essentially in all cases, there is an edge $c_1c_2$ between two $\beta'(0,1)$-trees having root label 2 such that the edges corresponding to it on levels ``root label = 0''  and ``root label = 1'' are involved in the respective Hamiltonian cycles assumed by the induction hypothesis (these edges are $a_ka_{k+1}$ and $b_kb_{k+1}$ in Figure~\ref{two_figures}(b)). The existence of $c_1c_2$ allows us to change the turns taken from the two Hamiltonian cycles making sure that the first and the last vertices in $P$ will be $a_1$ and $a_m$. 
In what follows, for a given tree we will refer to a vertex different from the root or a leaf as {\it internal vertex}.
 The two situations when  $c_1c_2$ does not exist are easy to handle. This happens when 

\begin{itemize} 
\item there are no internal vertices in a tree. For a given number of vertices, there are only two such $\beta'(0,1)$-trees, any listing of which gives a cyclic $1$-Gray code;
\item there is exactly one internal vertex in a tree. One can easily check that, for any tree $T$ with exactly one internal vertex, there are five $\beta'(0,1)$-trees with the shape of $T$,
and the number of such trees with root label 0 (equivalently, 1), which is 
the number of $a_i$'s (equivalently, $b_i$'s)
is two, which is even, so that the existence of a Hamiltonian cycle in this case is easy to establish using the approach in Figure~\ref{two_figures}(a).  For example, the (ordered) list below is a cyclic $1$-Gray code for the $\beta'(0,1)$-trees of same shape on four vertices:
$$  
\treb{0}{0}{0}{0}\qquad
\treb{0}{0}{0}{1}\qquad
\treb{0}{0}{1}{1}\qquad
\treb{0}{0}{1}{2}\qquad
\treb{0}{0}{1}{0}\qquad
$$
\end{itemize}

Recall from Definition \ref{def-L-T} that for a $\beta'(0,1)$-tree $T$, $L(T)$ denotes the set 
of encodings of all $\beta'(0,1)$-trees obtained by labeling properly $u(T)$. The following lemma generalizes Lemma~\ref{lem1}.

\begin{lemma}\label{lem1-gen} For any $\beta'(0,1)$-tree $T$ there is a cyclic $1$-Gray code for $L(T)$. \end{lemma}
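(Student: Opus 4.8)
The plan is to prove Lemma~\ref{lem1-gen} by induction on the number of vertices $v$ of $T$, exactly mirroring the structure of the proof of Lemma~\ref{lem1} but upgrading each ``path'' to a ``cycle''. The base cases $v=1$ and $v=2$ are immediate, since $L(T)$ is a singleton or a two-element set and any listing of a two-element set is trivially a cyclic $1$-Gray code. The two degenerate cases already disposed of in the text — trees with no internal vertices, and trees with exactly one internal vertex — handle the situations where the special edge $c_1c_2$ below does not exist, so for the inductive step I may assume $T$ has at least two internal vertices. Writing the subtrees hanging off the root as $T_1,\dots,T_k$ ($k\geq 1$), I first apply the induction hypothesis together with Proposition~\ref{product_set_cyclic} to get a cyclic $1$-Gray code $\langle \alpha_1,\alpha_2,\dots,\alpha_M\rangle$ for $L(T_1)\times\cdots\times L(T_k)$; here $M$ is even if and only if $L(T_1)\times\cdots\times L(T_k)$ is not a singleton (a cyclic $1$-Gray code on a set of more than one tuple has even length, since colours/parities alternate — this is where Fact~\ref{fact} enters). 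As in Lemma~\ref{lem1}, let $m(\alpha_j)$ be the maximal admissible root label given the sublabeling $\alpha_j$, and build $\mathcal M$ by prepending $\gamma(m(\alpha_j))$ or $\delta(m(\alpha_j))$ to $\alpha_j$ according to the parity of $j$.

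\textbf{Closing the cycle.} If $M$ is even the construction of Lemma~\ref{lem1} already closes up: the first tuple of $\mathcal M$ is $(0)\cdot\alpha_1$, the last is $(0)\cdot\alpha_M$ (since the last block uses $\delta$), $\alpha_1$ and $\alpha_M$ differ in a single position because the $\alpha$-list is cyclic, so $\mathcal M$ is a cyclic $1$-Gray code. The real work is the case $M$ odd, i.e. $L(T_1)\times\cdots\times L(T_k)$ is a singleton — equivalently every $T_i$ is forced, which (given $v\geq 3$ and at least two internal vertices) is impossible unless\ldots actually one must be careful: $M$ odd forces $M=1$, and then $L(T)=\{(0),(1),\dots\}$ is a path-like set with at least the labels $0$ and $1$, and the needed cyclic code is just $\langle (0),(2),(3),\dots,(m),(1)\rangle$ read as a single coordinate — a direct check. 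Hence the genuinely nontrivial inductive step is the one sketched in Figure~\ref{two_figures}(b): even though $\mathcal M$ is already cyclic when $M$ is even, for the \emph{outer} induction I must additionally guarantee that the cyclic code I produce contains, on the ``root label $=0$'' level and on the ``root label $=1$'' level, a \emph{common} turn edge $a_ka_{k+1}$, $b_kb_{k+1}$ — so that the next level up can splice in a $c_1c_2$ edge. So the lemma I actually prove by induction is slightly stronger: ``there is a cyclic $1$-Gray code for $L(T)$ together with a distinguished pair of consecutive tuples whose difference is in the root coordinate'' — no, wait; the distinguished edge must be a non-root turn shared by the $0$- and $1$-levels. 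I would therefore carry along the inductive hypothesis in the strengthened form: the cyclic code for $L(T_i)$ has a consecutive pair $\alpha_k,\alpha_{k+1}$ whose sub-Gray-codes at every admissible root value of $T_i$ also place these in consecutive position — equivalently, a fixed index position $p$ at which two cyclically-consecutive tuples of the product list agree, so the turn can be propagated.

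\textbf{The splicing argument.} Granting the strengthened hypothesis, the edge $c_1c_2$ between two $\beta'(0,1)$-trees of root label $2$ exists precisely because the turn edges $a_ka_{k+1}$ and $b_kb_{k+1}$ (same underlying index, differing only in some non-root coordinate) lift to an edge between the corresponding trees with root label $2$ — legal since raising the root label from $1$ to $2$ is always admissible when the level-$1$ tree existed and the sum of children's labels is large enough, and the case analysis on whether level $2$ is populated is exactly the ``one internal vertex'' dichotomy already excluded. Using $c_1c_2$ I re-route: instead of the snake of Figure~\ref{two_figures}(a) that goes up one vertical, across the $0$-level, and never returns the endpoints to $a_1,a_m$, I traverse the $0$-level Hamiltonian cycle from $a_1$ to $a_k$, jump to the $1$-level via the vertical, run the $1$-level cycle, jump to the $2$-level through $c_1$, cover the (short) higher part, come back down through $c_2$ and $b_{k+1}$, finish the $1$-level cycle, drop to $a_{k+1}$ and complete the $0$-level cycle back to $a_1$; parity of the number of level-$0$ vertices no longer obstructs closure because the re-routing absorbs the single ``extra'' vertical traversal. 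Each step above changes exactly one coordinate, so the resulting list is a cyclic $1$-Gray code, and by construction it again contains a propagatable turn, closing the induction.

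\textbf{Main obstacle.} The crux — and the part I expect to be fiddly rather than deep — is making the ``propagatable turn'' bookkeeping precise: one must verify that the turn edge produced at each stage genuinely survives when all admissible root labels are stacked up (so that the parent can find its $c_1c_2$), and simultaneously check the small-case boundary (no internal vertex; one internal vertex) so that $c_1c_2$ is guaranteed to exist in every remaining case. Fact~\ref{fact}, which forces a cyclic $1$-Gray code on $\geq 2$ tuples to have even length and rules out ``two coordinates changing at once'', is the tool that keeps the coordinate-change accounting honest throughout.
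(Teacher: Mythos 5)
Your proposal has a genuine gap, and it sits exactly at the point the paper's proof is designed to address. You claim that the cyclic $1$-Gray code $\langle\alpha_1,\dots,\alpha_M\rangle$ for $L(T_1)\times\cdots\times L(T_k)$ has even length whenever this set is not a singleton (``colours/parities alternate''), and you then reduce the odd case to $M=1$. This is false: the labels are not binary, so the Hamming-distance-$1$ graph on these tuples is not bipartite, and cyclic $1$-Gray codes of odd length greater than $1$ exist. The paper points to exactly this phenomenon: a subtree shaped like a path on three vertices has five proper labelings, so if the root of $T$ has one such subtree then $M=5$; for instance $\langle(0,0,0),(1,0,0),(1,1,0),(2,1,0),(0,1,0)\rangle$ is a cyclic $1$-Gray code of length five. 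Fact~\ref{fact} says nothing about parity; it only restricts which coordinates may change between consecutive tuples. So your dichotomy ``$M$ even: the snake of Lemma~\ref{lem1} closes up; $M$ odd: $M=1$, direct check'' skips the entire difficult case, which is $M$ odd and large.

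When you sense this and retreat to a ``strengthened induction hypothesis'' carrying a ``propagatable turn'', that is precisely the part you never make precise, and it is also not what the paper does: the paper does not strengthen the induction at all. Instead, assuming $T$ has at least two internal vertices (the other cases being the easy ones you also set aside), it proves directly that the cyclic Gray code for $L(T_1)\times\cdots\times L(T_k)$ — given by the induction hypothesis and Proposition~\ref{product_set_cyclic} — contains two \emph{consecutive} tuples each of which can be extended by root label $2$ (i.e.\ each has the sum of the root's children's labels at least $1$). This is done by a two-case analysis: if the root has exactly one internal child, apply the second bullet of Fact~\ref{fact} to a tuple, chosen away from the ends of the list, whose label at that child is nonzero (there are at least three such tuples because that subtree itself has an internal vertex); if the root has at least two internal children, take a tuple with label $1$ at two of them and apply the first bullet of Fact~\ref{fact}. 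That consecutive pair is the edge $c_1c_2$, and the rerouting of Figure~\ref{two_figures}(b) then closes the Hamiltonian cycle regardless of the parity of $M$. Your splicing picture is in the right spirit, but without an actual proof that $c_1c_2$ exists — and with the incorrect parity claim doing the work elsewhere — the inductive step does not go through.
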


\begin{proof} Based on the discussion preceding the statement of this lemma, the proof of Lemma~\ref{lem1} can be used if we will prove the existence of an edge $c_1c_2$ in the situation when $T$ has at least two internal vertices. 
If the subtrees of $T$ are $T_1,T_2\ldots,T_k$, the existence of $c_1c_2$ is equivalent with the existence of two successive tuples in the Gray code for the product set $L(T_1)\times L(T_2)\times \cdots \times L(T_k)$
which both can be extended to proper labelings of $T$ by letting the root of $T$ be 2.
To this end, we consider two subcases.

\begin{itemize}
\item The root of $T$ has exactly one internal vertex $v$ among its children. It follows that, 
$T_j$, the subtree rooted in $v$ in turn has at least one internal vertex among its children.
It is easily seen that the product set $L(T_1)\times L(T_2)\times \cdots \times L(T_k)$
collapses to $(0,\ldots,0) \cdot L(T_j)\cdot (0,\ldots,0)$, for appropriate length 
$0$s prefix and suffix, and a Gray code for this product set is essentially a Gray code for  $L(T_j)$.
Let $u$ be a tuple in $L(T_j)$ representing a proper labeling of $T_j$ 
where the label of $v$ (that is, the first entry of $u$) is a non-zero value. Since $T_j$ has at least one internal vertex it follows that there are at least three such $u$, and we choose one which is not the last nor the first tuple in the Gray code for $L(T_j)$. 
The existence of this Gray code follows by inductive hypothesis.
By the second point of Fact \ref{fact}, at least one among the successor and the predecessor of $u$, in the Gray code for $L(T_j)$, has a non-zero value in the position corresponding to $v$, the root of $T_j$.
It follows that there are two successive tuples in the Gray code for $L(T_j)$, and so for the 
above product set, which can be extended to a proper labelings of $T$ by letting the root of $T$ be 2, which is allowed by the rules of $\beta'(0,1)$-trees; and this extension will give us the desired edge $c_1c_2$.

\item The root of $T$ has at least two internal vertices among its children, and let 
$i$ and $j$ be the positions corresponding to the labels of two of these vertices in the tuples in
$L(T_1)\times L(T_2)\times \cdots \times L(T_k)$.
Clearly, $1$ is an admissible value for the entries in positions $i$ and $j$, and let $u$ be
a tuple in $L(T_1)\times L(T_2)\times \cdots \times L(T_k)$ having $1$ in both positions $i$ and 
$j$.
By the first point of Fact~\ref{fact} it follows that if $u$ is not the last tuple
in the  Gray code for the product set $L(T_1)\times L(T_2)\times \cdots \times L(T_k)$ 
(whose existence we assume by inductive hypothesis and Proposition \ref{product_set_cyclic}), then the successor of  $u$ has $1$ in at least one of these two positions.
The reasoning is similar when $u$ is the last tuple by replacing ``successor'' by ``predecessor''.
And again, it follows that there are two successive tuples in the Gray code for $L(T_1)\times L(T_2)\times \cdots \times L(T_k)$ which can be extended to proper labelings of $T$ by letting the label of the root of $T$ be 2.    
\end{itemize}
Thus, in both cases there is a $1$-Gray code for $L(T)$ with the first and last tuple of the form $(0,\alpha_1,\alpha_2,\dots)$ 
and $(0,\omega_1,\omega_2,\dots)$, respectively, where$(\alpha_1,\alpha_2,\dots)$ and $(\omega_1,\omega_2,\dots)$ are the first and last tuples in the cyclic Gray code for the product set $L(T_1)\times L(T_2)\times \cdots \times L(T_k)$, and the statement follows.
\end{proof}

Using Lemma~\ref{lem1-gen}, we can now generalize Theorem~\ref{Z(T)-Hamilt-path}, which is the main result in this subsection.

\begin{theorem}\label{Z(T)-Hamilt-path-cycle} There exists a cyclic $1$-Gray code for \btrees\ having the same underlying tree. \end{theorem}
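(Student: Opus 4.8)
The plan is to mirror the proof of Theorem~\ref{Z(T)-Hamilt-path}, replacing the use of Lemma~\ref{lem1} with Lemma~\ref{lem1-gen} and the use of Proposition~\ref{product_set} with Proposition~\ref{product_set_cyclic}. Let $T$ be a \btree\ whose root (labeled by $*$) has subtrees $T_1,\ldots,T_k$ with $k\geq 1$. Each $T_i$ is a $\beta'(0,1)$-tree, so by Lemma~\ref{lem1-gen} the set $L(T_i)$ admits a cyclic $1$-Gray code (or is a singleton, in the degenerate case where $T_i$ is a single vertex, but even then its encoding set is a singleton to which Proposition~\ref{product_set_cyclic} still applies).

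Next I would apply Proposition~\ref{product_set_cyclic} to the family $L(T_1),L(T_2),\ldots,L(T_k)$: each member is either a singleton or carries a cyclic $1$-Gray code, so the product set $L(T_1)\times L(T_2)\times\cdots\times L(T_k)$ has a cyclic $1$-Gray code. Finally, since the root label of a \btree\ is replaced by the symbol $*$, which is the same in every labeling of $u(T)$, we have the identity $L(T)=\{(*)\}\times L(T_1)\times L(T_2)\times\cdots\times L(T_k)$, i.e.\ the first coordinate is constant across all tuples of $L(T)$. Prepending this constant coordinate to each tuple of the cyclic $1$-Gray code for the product set preserves both the Hamming-distance-$1$ property between consecutive tuples and the cyclicity (the wrap-around pair still differs in exactly one position). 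Hence $L(T)$ has a cyclic $1$-Gray code, and since $T$ was arbitrary, the theorem follows.

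There is essentially no obstacle here: all the work has already been absorbed into Lemma~\ref{lem1-gen} (the delicate edge-$c_1c_2$ argument) and into Proposition~\ref{product_set_cyclic} (the cyclic product construction). The only point requiring a word of care is the presence of the constant $*$ coordinate, which is exactly why the $*$ convention was introduced: because it never changes, it contributes nothing to the Hamming distance and the theorem reduces cleanly to the product-set statement rather than to the more subtle root-label analysis that was needed inside Lemma~\ref{lem1-gen}.
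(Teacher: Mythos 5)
Your proposal is correct and follows essentially the same route as the paper's proof: decompose $T$ at the root into the subtrees $T_1,\ldots,T_k$, apply Lemma~\ref{lem1-gen} to each $L(T_i)$, combine via Proposition~\ref{product_set_cyclic}, and prepend the constant $*$ coordinate to get $L(T)=\{(*)\}\times L(T_1)\times\cdots\times L(T_k)$. Your extra remarks on the singleton case and on the $*$ coordinate not affecting the Hamming distance are minor elaborations of the same argument.
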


\begin{proof} Suppose that the root of a \btree\ $T$, labeled by $*$, has subtrees $T_1,\ldots,T_k$, where $k\geq 1$. Each $T_i$ is a $\beta'(0,1)$-tree, and thus, by Lemma~\ref{lem1-gen}, there is a cyclic $1$-Gray code for each $L(T_i)$. But then, by Proposition~\ref{product_set_cyclic},
there is also a cyclic $1$-Gray code for $L(T_1)\times L(T_2)\times \cdots \times L(T_k)$ leading to the fact that $L(T)=\{(*)\}\times L(T_1)\times L(T_2)\times \cdots \times L(T_k)$ has a cyclic $1$-Gray code, as desired.\end{proof}

Since the tuple $(*,0,0,\ldots ,0)$ of appropriate length is always an admissible encoding of a $\beta(0,1)$-trees, 
we have:

\begin{corollary}
\label{one_corr}
There exists a $1$-Gray code list for \btrees\ having the same underlying tree which begins by 
$(*,0,0,\ldots ,0)$ and ends by a tuple differing from $(*,0,0,\ldots ,0)$ in exactly one position.
\end{corollary}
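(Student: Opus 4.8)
The plan is to invoke Theorem~\ref{Z(T)-Hamilt-path-cycle}, which guarantees a cyclic $1$-Gray code $\mathcal C$ for the set of \btrees\ having a fixed underlying tree $u(T)$, and then to exploit the cyclicity to re-anchor the list at the all-zeros encoding. First I would observe that $(*,0,0,\ldots,0)$ (with the appropriate number of zero entries, one per non-root vertex) is indeed an admissible encoding: labeling every non-root vertex by $0$ satisfies clause~(3) of Definition~\ref{beta01-tree} trivially (each such vertex has children-sum $0$ and is itself $0\le 0+1$), and the root label is then forced, which is exactly why we write $*$ there. Hence this tuple appears exactly once in $\mathcal C$.

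Next, since $\mathcal C$ is \emph{cyclic}, it may be read starting from any of its elements: if $\mathcal C=\langle \beta_1,\beta_2,\ldots,\beta_N\rangle$ with $\beta_N$ and $\beta_1$ at Hamming distance $1$, and if $(*,0,\ldots,0)=\beta_t$, then the rotated list $\langle \beta_t,\beta_{t+1},\ldots,\beta_N,\beta_1,\ldots,\beta_{t-1}\rangle$ is again a cyclic $1$-Gray code for the same set (every pair of cyclically consecutive tuples in $\mathcal C$ remains consecutive after rotation, including the wrap-around pair $(\beta_N,\beta_1)$). This rotated list begins with $(*,0,\ldots,0)$ by construction. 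Its second element is the original successor (or, if $t=N$, the element $\beta_1$) of $(*,0,\ldots,0)$ in $\mathcal C$, which differs from $(*,0,\ldots,0)$ in exactly one position because $\mathcal C$ is a $1$-Gray code and consecutive tuples of a $1$-Gray code must differ in a position (they cannot be equal, as the list has no repetitions). Thus the rotated list meets the stated requirements, with the further bonus that it is cyclic, so in particular its last tuple $\beta_{t-1}$ also differs from $(*,0,\ldots,0)$ in one position; dropping the cyclicity claim if one only wants the weaker statement costs nothing.

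There is essentially no obstacle here: the only mild point of care is making sure that the number of $0$ entries in $(*,0,\ldots,0)$ matches $|V(u(T))|-1$, so that the tuple has the correct length to lie in the set being Gray-coded, and that this tuple is genuinely in the set (handled above). Everything else is the standard fact that a cyclic Gray code can be cyclically shifted to start wherever one pleases. I would therefore keep the proof to two or three sentences: cite Theorem~\ref{Z(T)-Hamilt-path-cycle}, note admissibility of $(*,0,\ldots,0)$, and rotate.
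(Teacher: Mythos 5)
Your proposal is correct and matches the paper's (implicit) argument: the paper derives Corollary~\ref{one_corr} from Theorem~\ref{Z(T)-Hamilt-path-cycle} precisely by noting that $(*,0,0,\ldots,0)$ is always an admissible encoding, so the cyclic $1$-Gray code can be read starting from that tuple, making its predecessor the final tuple at Hamming distance $1$. Your rotation argument and the admissibility check are exactly this reasoning, just spelled out in more detail.
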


\subsection{Dyck words and Gray coding bicubic maps}\label{main}

For two integers $k$ and $m$, $0\leq k\leq  m$, we denote by $D_{m,k}$ the set of binary tuples 
with $m$ occurrences of $1$ and  $k$ occurrences of $0$,
satisfying the {\it prefix property}: no prefix contains more $0$s than $1$s. For example, $(1,0,1,1,0,1,1,0,0)\in D_{5,4}$. The set $D_{m,m}$ is known as the set of {\em Dyck words} of length $2m$. The number of elements in $D_{m,m}$ is the well-known $m$th {\em Catalan number} $C_{m}=\frac{1}{m+1}{2m \choose m}$.
See Table \ref{list_pref} for the 14 length 8 Dyck words.

Traversing a plane tree by the depth first algorithm using the leftmost option and letting 1 represent a forward step, while 0 represent a backward step, we obtain a one-to-one correspondence between plane trees on $n+1$ vertices (and $n$ edges) and Dyck words of length $2n$. We let $w(T)$ denote the Dyck word corresponding to a plane tree $T$ under this bijection. For example, for the tree $T$ on 4 vertices illustrating Definition~\ref{def-underl}, $w(T)=(1,1,0,1,0,0)$;
see also Figure \ref{co_example}.

Note that the Hamming distance between two same length Dyck words is always even, and thus the minimum distance between two Dyck words is 2. 

The following recursive description obtained in \cite{Vaj_99} gives 
a Gray code for $D_{m,k}$, and, in particular, for the set of length $2m$
Dyck words; this description is a slight variation of the code defined in~\cite{Rus_90}:

\begin{equation}
\mathcal{D}_{m,k}=\left\{ \begin {array}{ccc} 
     (1)^m         & \text{if} & k=0,  \\   
    \mathcal{D}_{m,k-1}\cdot (0)  & \text{if} & m=k>0,\\
    \mathcal{D}_{m-1,k}\cdot (1) \circ  \overline {\mathcal{D}}_{m,k-1}\cdot (0) & \text{if} & m>k>0.\\
\end {array}
\right.
\label{G_Dy}
\end{equation}
See Table \ref{list_pref}, showing $\mathcal{D}_{4,4}$, for an example. The following lemma was proved in~\cite{Vaj_99}.  

\begin{lemma}\cite{Vaj_99} The list $\mathcal{D}_{m,k}$ satisfies the following properties:
\begin{itemize}
\item The first tuple in $\mathcal{D}_{m,k}$ is $(1,0)^k\cdot (1)^{m-k}$; 
\item The last tuple in $\mathcal{D}_{m,k}$ is
\begin{itemize}
\item $(1,0)^{m-2}\cdot(1,1,0,0)$, if $k=m>1$,
\item $(1,0)^{k-1}\cdot(1)^{m-k+1}\cdot(0)$, if $m>k\geq1$ or $m=k=1$,
\item $(1)^m$, if $m>k=0$;
\end{itemize}
\item Two successive tuples in $\mathcal{D}_{m,k}$, including the last and the first one, differ in exactly two positions, and thus $\mathcal{D}_{m,k}$ is a cyclic $2$-Gray code.
\end{itemize}
\end{lemma}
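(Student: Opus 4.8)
The plan is to verify all three claimed properties of $\mathcal{D}_{m,k}$ by induction on $m+k$, following the three-case recursive definition~\eqref{G_Dy}. The base cases are the degenerate ones $k=0$ (where the list is the singleton $\langle(1)^m\rangle$) and small values such as $m=k=1$ (where the list is $\langle(1,0)\rangle$); for these the first tuple, last tuple, and the consecutive-difference claims are immediate. For the inductive step I would split according to which branch of the recursion applies.

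First I would handle the first-tuple claim. In the case $m=k>1$ the list is $\mathcal{D}_{m,k-1}\cdot(0)$, so its first tuple is (the first tuple of $\mathcal{D}_{m,k-1}$) followed by $0$; by the inductive hypothesis that first tuple is $(1,0)^{k-1}\cdot(1)^{m-k+1}=(1,0)^{k-1}\cdot(1)$ since $m=k$, hence appending $0$ gives $(1,0)^{k-1}\cdot(1,0)=(1,0)^k=(1,0)^m$, which matches the claimed formula $(1,0)^k\cdot(1)^{m-k}$ with $m-k=0$. In the case $m>k>0$ the list begins with $\mathcal{D}_{m-1,k}\cdot(1)$, whose first tuple is $(1,0)^k\cdot(1)^{m-1-k}\cdot(1)=(1,0)^k\cdot(1)^{m-k}$, again as claimed. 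Next the last-tuple claim: in the branch $m=k>1$, the last tuple of $\mathcal{D}_{m,k-1}\cdot(0)$ is (last tuple of $\mathcal{D}_{m,k-1}$)$\cdot(0)$; since $m>k-1\geq 1$, the inductive hypothesis gives that last tuple as $(1,0)^{k-2}\cdot(1)^{m-k+2}\cdot(0)=(1,0)^{k-2}\cdot(1,1,0)$ (using $m=k$), and appending a final $0$ yields $(1,0)^{k-2}\cdot(1,1,0,0)=(1,0)^{m-2}\cdot(1,1,0,0)$, matching the first sub-bullet. In the branch $m>k>0$, the list ends with $\overline{\mathcal{D}}_{m,k-1}\cdot(0)$, so its last tuple is (the \emph{first} tuple of $\mathcal{D}_{m,k-1}$)$\cdot(0)=(1,0)^{k-1}\cdot(1)^{m-k+1}\cdot(0)$, matching the second sub-bullet; and the branch $m>k=0$ gives the singleton $(1)^m$ directly.

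The substantive part is the consecutive-difference claim, and this is where I expect the main obstacle to lie, specifically in the \emph{cyclic} closing condition (that the last tuple differs from the first in exactly two positions) rather than in the interior transitions. For the interior: within $\mathcal{D}_{m-1,k}\cdot(1)$ and within $\overline{\mathcal{D}}_{m,k-1}\cdot(0)$, successive tuples differ in exactly two positions by the inductive hypothesis (reversal preserves the Hamming-distance pattern, and the appended constant contributes nothing); at the single splice point between the two sublists, one must check that the last tuple of $\mathcal{D}_{m-1,k}\cdot(1)$ and the first tuple of $\overline{\mathcal{D}}_{m,k-1}\cdot(0)$ — i.e.\ the last tuple of $\mathcal{D}_{m,k-1}$ followed by $0$ — differ in exactly two positions. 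Using the explicit last-tuple formulas for $\mathcal{D}_{m-1,k}$ and $\mathcal{D}_{m,k-1}$ derived above, this is a short finite comparison: both prefixes are close to the $(1,0)$-repetition pattern and one locates precisely the two flipped coordinates. For the cyclic condition one compares the global first tuple $(1,0)^k\cdot(1)^{m-k}$ with the global last tuple as given by the lemma's formula, again a direct coordinate-by-coordinate count, treating the three cases $k=m>1$, $1\le k<m$, and $k=0$ separately (the $k=0$ singleton being vacuously cyclic). Once the pairwise distance is shown to be exactly $2$ everywhere, the "$2$-Gray code" conclusion, and its cyclicity, follow immediately, completing the induction.
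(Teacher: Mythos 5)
Your proposal is correct, and it is worth noting that the paper itself gives no proof of this lemma at all: it is quoted from \cite{Vaj_99}, so your induction on the recursion \eqref{G_Dy} is a self-contained reconstruction of what the cited reference establishes, along the natural lines. Your derivations of the first- and last-tuple formulas in each branch are right, and the overall architecture is sound: the interior adjacencies come from the inductive hypothesis (reversal and appending a constant coordinate preserve the ``differ in exactly two positions'' property), the only new adjacency in the branch $m>k>0$ is the splice between the last tuple of $\mathcal{D}_{m-1,k}\cdot(1)$ and the last tuple of $\mathcal{D}_{m,k-1}$ with a $0$ appended, and the cyclic closure is a direct comparison of the already-established global first and last tuples, so no circularity enters the induction. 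The one place where you assert rather than verify is precisely that splice comparison; it does require splitting into subcases according to which last-tuple formula applies to $\mathcal{D}_{m-1,k}$ (namely $m-1=k$ versus $m-1>k$, and $k=1$ versus $k>1$, since $\mathcal{D}_{m,k-1}$ falls under the $k-1=0$ singleton when $k=1$), but in every subcase the two tuples indeed differ in exactly two positions --- e.g.\ for $m>k+1$, $k\geq 2$ one compares $(1,0)^{k-1}\cdot(1)^{m-k}\cdot(0,1)$ with $(1,0)^{k-2}\cdot(1)^{m-k+2}\cdot(0,0)$, which differ only in position $2k-2+2$ and in the final position. Likewise the cyclic check for $m>k\geq 1$ compares $(1,0)^k\cdot(1)^{m-k}$ with $(1,0)^{k-1}\cdot(1)^{m-k+1}\cdot(0)$, giving exactly two differences, and the case $k=m>1$ compares $(1,0)^m$ with $(1,0)^{m-2}\cdot(1,1,0,0)$. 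So the argument goes through; spelling out those finitely many subcases is all that separates your sketch from a complete proof.
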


\begin{table}
\begin{center}
\begin{tabular}{|c|}
\hline
(1, 0, 1, 0, 1, 0, 1, 0) \\
(1, 1, 0, 0, 1, 0, 1, 0) \\
(1, 1, 1, 0, 0, 0, 1, 0) \\
(1, 1, 0, 1, 0, 0, 1, 0) \\
(1, 0, 1, 1, 0, 0, 1, 0) \\
(1, 0, 1, 1, 1, 0, 0, 0) \\
(1, 1, 0, 1, 1, 0, 0, 0) \\
(1, 1, 1, 0, 1, 0, 0, 0) \\
(1, 1, 1, 1, 0, 0, 0, 0) \\
(1, 0, 1, 1, 0, 1, 0, 0) \\
(1, 1, 0, 1, 0, 1, 0, 0) \\
(1, 1, 1, 0, 0, 1, 0, 0) \\
(1, 1, 0, 0, 1, 1, 0, 0) \\
(1, 0, 1, 0, 1, 1, 0, 0) \\
\hline
\end{tabular}
\end{center}
\caption{
\label{list_pref}The Gray code list $\mathcal{D}_{4,4}$ for the set of length $8$ Dyck words.
}
\end{table}

In what follows, the Hamming distance between tuples is denoted by $d$, and the next definition 
extends it to trees.

\begin{definition}\label{distance-beta01-trees1} For \btrees\ $T_1$ and $T_2$ on the same number of vertices, the {\em distance} $d(T_1,T_2)$ between the trees is defined as
$$d(T_1,T_2)=d(\ell(T_1),\ell(T_2)) + d(w(u(T_1)), w(u(T_2))).$$
\end{definition}

\begin{theorem}\label{thm-main} There exists a cyclic $3$-Gray code for \btrees\ {\rm(}with the root labeled by *{\rm)} on $n$ vertices, $n\geq 1$,
with respect to the distance given in Definition~\ref{distance-beta01-trees1}.\end{theorem}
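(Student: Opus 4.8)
The plan is to combine the cyclic $1$-Gray code for $\beta(0,1)$-trees of a fixed shape (Corollary~\ref{one_corr}) with the cyclic $2$-Gray code $\mathcal{D}_{n-1,n-1}$ for the Dyck words encoding the shapes (the lemma of \cite{Vaj_99}). The natural strategy is to iterate over the shapes in the order given by $\mathcal{D}_{n-1,n-1}$: for each Dyck word $w$ in that list, list all $\beta(0,1)$-trees with underlying tree $u$ satisfying $w(u)=w$, using the cyclic $1$-Gray code of Theorem~\ref{Z(T)-Hamilt-path-cycle}, and then move to the next shape. Within one shape block the Hamming distance between successive tuples is $1$ (only an $\ell$-coordinate changes, the Dyck-word part is constant), so those transitions cost $1$. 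The transition between the last tree of one shape block and the first tree of the next costs $d(w,w')+d(\ell\text{-part},\ell'\text{-part})$; since consecutive Dyck words in $\mathcal{D}_{n-1,n-1}$ differ in exactly $2$ positions, $d(w,w')=2$, and we must control the $\ell$-part contribution so that the total is at most $3$, i.e. the two $\ell$-parts must agree.

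The key mechanism for controlling the $\ell$-part across a shape change is Corollary~\ref{one_corr}: for every shape we may arrange the cyclic $1$-Gray code for its $\beta(0,1)$-trees so that it begins at the all-zeros labeling $(*,0,0,\ldots,0)$. Moreover, since the list is cyclic, after traversing it we may also arrange, by reversing it if necessary, that it \emph{ends} at a tuple on distance $1$ from $(*,0,\ldots,0)$, or — and this is the point — we can simply choose, for each shape, to present the block so that it starts and ends ``near'' the all-zeros labeling. Concretely, I would make every shape-block either start at $(*,0,\ldots,0)$ and end at a neighbor of it, or (its reverse) start at a neighbor of $(*,0,\ldots,0)$ and end at $(*,0,\ldots,0)$; alternating these two orientations along the list of shapes lets consecutive blocks meet with the $\ell$-part being all zeros on both sides of the junction. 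A cleaner variant: insert, at the beginning and end of each shape-block, the all-zeros labeling, i.e. between two shape-blocks we pass through $(w,0,\ldots,0)$ then $(w',0,\ldots,0)$; the $\ell$-parts agree (all zeros), so the distance is exactly $d(w,w')=2\le 3$, while inside the block the first and last transitions to/from $(*,0,\ldots,0)$ cost $1$ by Corollary~\ref{one_corr}. One must only check that this does not create a repetition — the all-zeros labeling of a given shape appears in exactly one block — which is immediate.

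For cyclicity of the whole list, I would use that $\mathcal{D}_{n-1,n-1}$ is itself \emph{cyclic}: the last Dyck word and the first Dyck word differ in $2$ positions, so if the final shape-block ends at its all-zeros labeling and the first shape-block begins at its all-zeros labeling, wrapping around costs $d=2$ again. Thus the list is a cyclic $3$-Gray code. The small cases — $n=1$ (single vertex, trivial), $n=2$ (two trees, shape is fixed, distance between them is $1$) — are handled directly, and one should also double-check the degenerate shapes (path, star) where the cyclic $1$-Gray code of a shape-block might be very short, but Corollary~\ref{one_corr} already guarantees a list starting at $(*,0,\ldots,0)$ in all cases.

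The main obstacle I expect is the junction bookkeeping: making precise that for \emph{every} shape one can orient its cyclic $1$-Gray code so that the two block-endpoints are controlled simultaneously (start at all-zeros, end adjacent to all-zeros), and then verifying that the alternation of orientations along $\mathcal{D}_{n-1,n-1}$ is globally consistent, including at the wrap-around edge of the cyclic Dyck-word list. If a parity clash arises there, the fix is to route through the all-zeros tuple $(w,0,\ldots,0)$ explicitly as a ``hub'' at each shape change as described above, which removes the need for orientation alternation entirely and makes all inter-block distances exactly $2$; the only cost is one extra check that this hub tuple is not already the terminal tuple produced inside the block, which Corollary~\ref{one_corr} lets us avoid by construction. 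Everything else — that intra-block steps are distance $1$, that $d(w,w')=2$ for consecutive Dyck words, and that no tree is listed twice — is routine given the cited results.
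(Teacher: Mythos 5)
Your overall architecture is the same as the paper's: order the shapes by the cyclic $2$-Gray code $\mathcal{D}_{n-1,n-1}$ and, within each shape, list the labelings using Corollary~\ref{one_corr}. The problem is your handling of the junctions between shape blocks. You assert that, since consecutive Dyck words differ in two positions, the $\ell$-parts at a shape change ``must agree''; in fact they only need to differ in at most \emph{one} position, and that is exactly what Corollary~\ref{one_corr} already provides: every block begins at $(*,0,\ldots,0)$ and ends at a tuple at Hamming distance $1$ from $(*,0,\ldots,0)$, so the step from the last tuple of one block to the first tuple of the next costs $2+1=3$, and the wrap-around junction is of the same kind because $\mathcal{D}_{n-1,n-1}$ is cyclic. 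That single observation is the paper's entire proof; no orientation alternation and no ``hub'' insertion is needed.

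The two devices you introduce to force the $\ell$-parts to agree are both flawed. With alternating orientations, only every other junction has the all-zeros $\ell$-part on both sides; at the remaining junctions one block ends at a neighbour of all-zeros (differing from it in some position $i$) and the next block starts at a neighbour of all-zeros (differing in some position $j$), and when $i\neq j$ these $\ell$-parts differ in two positions, giving a total distance $2+2=4$ and breaking the claimed bound. The ``hub'' variant, in which each block both begins and ends with its all-zeros labeling, lists the tuple $(*,0,\ldots,0)$ for that shape twice, violating the no-repetition requirement in the definition of a Gray code; your check that the all-zeros labeling ``appears in exactly one block'' addresses a different, harmless concern, not this duplication within a block. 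Dropping both devices and using the block endpoints supplied by Corollary~\ref{one_corr} as they stand repairs the argument and is precisely what the paper does.
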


\begin{proof} Any \btree\ $T$ on $n$ vertices can be encoded by a $(3n-2)$-tuple, which is obtained by merging the $(2n-2)$-tuples $w(u(T))$ and the $n$-tuples $\ell(T)$. 
Let 
\begin{equation}
\label{list_gen}
d_1\cdot {\mathcal L}(T(d_1))\circ 
d_2\cdot {\mathcal L}(T(d_2))\circ
d_3\cdot {\mathcal L}(T(d_3))\circ
\cdots
\end{equation}
be the list where $d_i$ is the $i$th tuple in the list  $\mathcal{D}_{n-1,n-1}$ defined in relation (\ref{G_Dy}),
$T(d_i)$ is the tree encoded by $d_i$, and 
$\mathcal{L}(T(d_i))$ the list assumed by Corollary \ref{one_corr} for the  \btrees\ having the shape $T(d_i)$.

It is easy to see that in the list defined in relation (\ref{list_gen}) two successive tuples 
are at distance at most 3. Indeed,
\begin{itemize}
\item for a fixed $d_i$, successive tuples in $d_i\cdot {\mathcal L}(T(d_i))$ differ in one position, and 
\item for two successive tuples $d_i$ and $d_{i+1}$ in $\mathcal{D}_{n-1,n-1}$
(including the last and the first ones), the last tuple in $d_i\cdot {\mathcal L}(T(d_i))$ and the first one in $d_{i+1}\cdot {\mathcal L}(T(d_{i+1}))$ differ in three positions.
\end{itemize}
\end{proof}

Note that the Gray code stated in Theorem \ref{thm-main} for
$\beta(0,1)$-trees is minimal, in the sense that, in general there are not cyclic $2$-Gray codes for 
$\beta(0,1)$-trees. 
See Table~\ref{co_example} for an example when $\beta(0,1)$-trees, encoded by 
(1,1,0,0,*,0,0), 
(1,1,0,0,*,1,0)  and 
(1,0,1,0,*,0,0), cannot be listed cyclically 
so that the distance between successive trees is at most 2.
Also, the Gray code defined in (\ref{list_gen}) is ``shape partitioned'', that is, same shape $\beta(0,1)$-trees are successive in it.

\begin{table}[h]
\begin{tabular}{|c||M{1.1cm}|M{1.1cm}|M{1.1cm}|}
\hline
   $T$ & 
  \treav{0}{0}{*}\qquad &
  \treav{0}{1}{*}\qquad &
  \trebv{0}{0}{*}\qquad \tabularnewline\hline
$\ell(T)$ & (*,0,0) & (*,1,0) & (*,0,0)\tabularnewline\hline
$w(u(T))$ & (1,1,0,0) & (1,1,0,0)&(1,0,1,0)\tabularnewline\hline
\end{tabular}

\caption{
\label{co_example}
The three $\beta(0,1)$-trees on three vertices with the root labeled by *, and the corresponding to them the depth first leftmost option reading of the labels, and the 
Dyck words coding their shape.}
\end{table}

Our way to Gray code bicubic maps can be applied to any class of planar maps that can be described in terms of $\beta(a,b)$-trees with $b\geq 1$. Namely, generalizing the notion of $\beta'(0,1)$-trees to that of $\beta'(a,b)$-trees (by removing the condition on the root in Definition~\ref{betaAB-tree}), we can essentially copy/paste all our arguments for  $\beta'(0,1)$-trees. Indeed, for such a $\beta'(a,b)$-tree $T$, the levels corresponding to the root's label $a$ and $a+1$ will be isomorphic, so that induction can be used in the way we used it for  $\beta'(0,1)$-trees. Thus, in particular, we can Gray code $3$-connected cubic planar maps and cubic non-separable planar maps corresponding to $\beta(1,1)$-trees and $\beta(2,2)$-trees, respectively \cite{CJS,CS1997}. 

Finally, note that having $b\geq 2$ would simplify some of our arguments. In particular, in this case there is no need to prove the existence of the edge $c_1c_2$ in Lemma~\ref{lem1-gen}, since  we will have at least three isomorphic levels of vertices corresponding to the root labels $a$, $a+1$ and $a+2$, so that existence of an edge with the right properties will be given to us automatically (in fact, each edge from the Hamiltonian path on level $a+2$ will have the right properties).

\section{Concluding remarks%
}\label{final-sec}

In this paper we have shown that classes of planar maps corresponding to  $\beta(a,b)$-trees with $b\geq 1$ have cyclic 3-Gray codes, and
these codes are minimal in the sense of Hamming distance. We leave it as an open problem to determine whether there exist (cyclic) $k$-Gray codes, for some $k\geq 1$, for $\beta(a,0)$-trees, where $a\geq 1$. In the case $a=1$ such a code would induce a Gray code on non-separable planar maps via the respective bijection \cite{CJS}. 

It would also be interesting to Gray code so-called $\alpha$-description trees (see  \cite{CJS} for the definition) that would induce Gray coding of {\em planar maps} and {\em Eulerian planar maps} \cite{CJS}.

\section*{Acknowledgments}

The first and the third authors were supported by Grant NSh-1939.2014.1 of President of
Russia for Leading Scientific Schools. The second author is grateful to London Mathematical Society and to the University of Bourgogne for supporting his work on this paper.

\end{document}